\newtheorem{lemma}{Lemma}
\newtheorem{proposition}{Proposition}
\def\phi{\varphi}
\def\l{\left}
\def\r{\right}
\def\({\left(}
\def\){\right)}
\newcommand{\eref}[1]{(\ref{#1})}
\def\b0{{\mathbf{0}}}
\def\bH{{\mathbf{H}}}
\begin{document}

\title{\huge \setlength{\baselineskip}{30pt} Feedback-Topology Designs for Interference Alignment in MIMO Interference Channels}
\author{\large \setlength{\baselineskip}{15pt}Sungyoon Cho, Kaibin Huang, Dongku Kim, Vincent K. N.  Lau, Hyukjin Chae, Hanbyul Seo and Byounghoon Kim\thanks{
S. Cho, K. Huang, D. Kim and H. Chae are with the school of EEE, Yonsei University, Korea;  V. K. N.  Lau  is with the dept. of ECE,  Hong Kong University of Science and Technology, Hong Kong; H. Seo and B. Kim are with the Advanced Communication Technology Lab, LG Electronics, Korea. Corresponding author: K. Huang (email: huangkb@ieee.org). 
}}
\maketitle

\begin{abstract}
 Interference alignment (IA) is a joint-transmission technique that achieves the maximum degrees-of-freedom (DoF) of the interference channel, which provides  linear scaling of the capacity with the number of users for  high signal-to-noise ratios (SNRs). Most prior work on IA is  based on the impractical assumption that perfect and global channel-state information (CSI) is available at all transmitters. To implement IA,  each receiver has to feed back CSI to all interferers, resulting in overwhelming feedback overhead. In particular,  the sum feedback rate of each receiver scales quadratically with the number of users even if the quantized CSI is fed back. To substantially suppress feedback overhead, this paper focuses on designing efficient arrangements of feedback links, called \emph{feedback topologies}, under the IA constraint. For the multiple-input-multiple-output (MIMO) $K$-user interference channel, we propose the feedback topology that supports sequential CSI exchange (feedback and feedforward) between transmitters and receivers so as to achieve IA progressively. This feedback topology is shown to reduce the network feedback overhead from a quadratic function of $K$ to a linear one. To reduce the delay in the sequential CSI exchange, an alternative feedback topology is designed for supporting two-hop feedback via a control station, which also achieves the linear feedback scaling with $K$.  Next, given the proposed feedback topologies, the feedback-bit allocation algorithm is designed for allocating feedback bits by each receiver to different feedback links so as to regulate the residual interference caused by the finite-rate feedback. Simulation results demonstrate that the proposed bit allocation leads to significant throughput gains especially in strong interference environments.     
\end{abstract}

\begin{keywords}
Interference alignment, degrees of freedom, MIMO, interference channels, feedback topology, limited feedback, feedback-bits allocation 
\end{keywords}

\section{Introduction}

In a wireless interference network, interference alignment (IA) maximizes the number of decoupled data links, called \emph{degrees of freedom} (DoF),  by aligning the cross-link interference signals for each user in a subspace of the signal space extended over time, frequency or space. Such alignment  requires the acquisition of perfect and global channel sate information at transmitters (CSIT), incurring potentially overwhelming CSI feedback overhead in practice. Therefore, the efficient CSIT acquisition remains the key challenge for implementing IA techniques and is the main theme of this paper. Specifically, efficient arrangements of CSI feedback links, called \emph{feedback topologies}, are proposed for reducing  the sum feedback overhead for IA. This overhead is further reduced by dynamically distributing  CSI bits over feedback links under a sum feedback constraint. 

The original IA techniques achieve the maximum DoF of the $K$-user  single-antenna  interference channel, namely $K/2$,  by asymptotic signal-space expansion to attain the ergodicity of the channel variation in time or frequency, called \emph{symbol extension} \cite{CadJafar:InterfAlignment:2007, Shen:Improved_IA:2008, Chung:BeamformingIA:2009 }. Given symbol extension, the bounds on the  achievable DoF for multiple-input-multiple-output (MIMO) interference channel were derived in \cite{gou:BoundMIMOIA:2008, Ghasemi:BoundMIMOIA:2009} and 
the optimal  IA solutions were obtained in closed-form for some specific settings. Due to the impracticality of symbol-extension, recent IA research has been focusing on quantifying the achievable DoF and designing matching IA solutions for a single realization of the MIMO interference channel, called the \emph{MIMO constant channel} \cite{Yetis:Feasibility:2009,Negro:Feasibility:2009,Tse:feasiblity:2011, Peters:iterativeMIMO:2009, Gomadam:iterativeMIMO:2008 }. In particular, the IA feasibility conditions were derived in \cite{Yetis:Feasibility:2009, Negro:Feasibility:2009,Tse:feasiblity:2011}  and iterative IA algorithms for achieving such conditions were proposed  in \cite{Peters:iterativeMIMO:2009, Gomadam:iterativeMIMO:2008}, which exploit  the channel reciprocity to achieve distributive implementation. In addition, the IA principle has been extended to design multi-cell precoding  for celluar networks \cite{Tse:Celluar:2009, Tse:Celluar:2010, Lee:Celluar:2010,Tresch:ImperfectCSI:2009 }.

In practice,  CSIT required for IA usually has to rely on finite-rate CSI feedback, called \emph{limited feedback},  from receivers to their interferers, resulting in imperfect CSIT. The required  scaling laws of the number of feedback bits per user for the IA algorithms to achieve the maximum DoF have been derived in \cite{Thukral:LimitedSISO:2009, Varanasi:LimitedMIMO:2009}. In the literature  of  limited feedback, comprehensive limited feedback algorithms have been designed for the single-user  (see e.g.,  \cite{Love:Grassmannian:2003, Mukkavilli:RVQ:2003, Love:Orthogonal:2005, Love:Unitary:2005, Yeung:RVQ:2007}) and multi-user MIMO systems (see e.g., \cite{Jindal:Broadcasting:2006, Yoo:Downlink:2007, Jindal:BD:2008}). However, there are few practical algorithms for limited feedback targeting  IA, which motivates the current work. 

This paper considers the $K$-user constant MIMO interference channel where each transmitter/receiver employs $M$ antennas. Based on the closed-form solution of IA precoders, we propose the feedback topologies which can be implemented by a finite-rate CSI feedforward and feedback links. The contributions of this paper are summarized as follows.
 \begin{itemize}

 \item[1)]  We propose  the centralized-feedback topologies, called as \emph{centralized-receiver feedback} and \emph{star feedback topology}, where a particular receiver or CSI control  station collects CSI from all receivers, computes the IA precoders,  and then communicates them to the transmitters. In the proposed feedback design, the total number of complex coefficients for CSI exchange, referred to as \emph{CSI overhead}, is shown to scale with the number of users $K$ \emph{linearly} rather than \emph{quadratically} for the conventional approach where each receiver feeds back the CSI to all transmitters through the feedback links for the computation of IA precoder \cite{CadJafar:InterfAlignment:2007}-\cite{Ghasemi:BoundMIMOIA:2009}.   
 \item[2)] While the centralized-feedback method is efficient for the reduction in the CSI overhead, it still requires a large amount of CSI overhead between receivers and an additional CSI control station. To address this issue, we further propose \emph{CSI-exchange  feedback topology} where the IA precoders are sequentially computed based on the exchange of pre-determined precoders (under the existence of feedforward/feedback channels) between subsets of transmitters and receivers. The proposed feedback design is performed on the distributed network without the centralized station that gathers CSI from all receivers. As a result, the proposed CSI-exchange feedback topology yields dramatic reduction of CSI overhead especially when $K$ is large.

 \item[3)] For practical implementations, we consider the impact of limited feedback on the performance of the feedback topology in the interference network. Assuming that random vector quantization (RVQ) in \cite{Yeung:RVQ:2007} is used for quantizing CSI, the expected cross-link interference power at each receiver is upper-bounded by sum of exponential functions of the numbers of feedback bits sent by the receiver. Both the centralized-feedback and CSI-exchange topologies are considered in the analysis.   
 
\item[4)] Minimizing the upper bounds on the above interference power gives a dynamic feedback-bit allocation algorithm based on the water-filling principle. Such an algorithm is shown to provide significant capacity gains over the uniform feedback-bit allocation especially for high SNR's. 
 
  \item[5)] Using the proposed feedback topologies, we derive the required number of feedback bits sent by each receiver for achieving the same DoF as the case of perfect CSIT, which increases linearly with $K$ and logarithmically with the transmission power. 
  
  \end{itemize}
The remainder of this  paper is organized as follows. In Section II, the system model is described. The three CSI feedback topologies are proposed in Section III. The effect of CSI-feedback quantization is analyzed and the dynamic feedback-bit allocation algorithm is proposed in Section IV and V, respectively.  Section VI provides simulation results and the concluding remarks are followed in  Section VII.  


\section{System Model}
 
We consider $K$ pairs MIMO interference channel where each node has $M$ antennas and delivers $d$ data streams to the target receiver over a common spectrum. The wireless channels are characterized by path-loss and small-scale fading and all channel-fading coefficients are assumed to be independent and identically distributed (i.i.d) circularly symmetric complex Gaussian random variables with zero mean and unit variance, denoted as $\mathcal{CN}(0,1)$. Let the $M\times M$ matrix $ {\bf{H}}^{[kj]}$ group the fading coefficients of the channel from transmitter $j$  to receiver $k$ and thus $ {\bf{H}}^{[kj]}$ comprises i.i.d. $\mathcal{CN}(0, 1)$ elements. Then, the channel from transmitter $j$  to receiver $k$ can be readily  written as  
$d_{kj}^{ - \alpha /2}{{\bf{H}}^{[kj]}}$, where $\alpha$ is the path-loss exponent and $d_{kj}$ is the propagation distance.  Let ${\mathbf{V}}^{[j]}=\left[ {\bf{v}}_{1}^{[j]} \cdots    {\bf{v}}_{d}^{[j]} \right]$ and  ${\mathbf{R}}^{[k]}=\left[ {\bf{r}}_{1}^{[k]} \cdots    {\bf{r}}_{d}^{[k]} \right]$   denote $M\times d$ precoder at transmitter $j$ and receive filter  at receiver $k$, where ${\left\| {{{\bf{v}}^{[j]}_i}} \right\|^2} = {\left\| {{{\bf{r}}^{[k]}_i}} \right\|^2} = 1$, $\forall i$. Then, the signal vector received  at receiver $k$ for the $i$-th data stream can be written  as
\begin{equation}\label{Eq:IA:Y}
{\bf{y}}^{[k]}_{i}  = \sqrt{P \over d}d_{kk}^{ - \alpha /2}{\bf{H}}^{[k\,k]} {\bf{v}}^{[k]}_{i} {s}^k_i  +  \sum\limits_{l \ne i}\sqrt{P \over d}d_{kk}^{ - \alpha /2}{\bf{H}}^{[k\,k]} {\bf{v}}^{[k]}_{l} {s}^k_l + \sum\limits_{j \ne k} \sqrt{P\over d}d_{kj}^{ - \alpha /2}{{\bf{H}}^{[k\,j]} {\bf{V}}^{[j]} {\bf s}^j  + {\bf{n}}_k } 
\end{equation}
where ${\bf{s}}^k = \left [s{^k_1} \cdots s{^k_d}\right]^{T}$ denotes the data symbols with ${s}^k_i = \mathcal{CN}(0, 1) $, $P$ is the transmission power  and ${\bf{n}}_k$ is additive white Gaussian noise (AWGN) vector with the covariance matrix ${{\bf{I}}_M}$. 

Throughout this paper, we consider FDD system and assume that each receiver has perfect knowledge of the fading coefficients $\l\{\bH^{[km]}\r\}_{m=1}^K$.  For the case of perfect CSIT, all interfering signals at each receiver can be fully eliminated by using IA precoders and ZF receive filters so that the achievable throughput is given by

\begin{equation}\label{Eq:IA:R_P}
R_{\textsf{perfect}} = \sum\limits_{k = 1}^K \sum\limits_{i = 1}^d {{{\log }_2}} \left( {1 + {{{P}{d_{kk}^{-\alpha} }{{\left| {{{{\bf{r}}}^{[k]\dag }_i}{{\bf{H}}^{[kk]}}{{{\bf{ v}}}^{[k]}_i}} \right|}^2}}}{{ }}} \right). 
\end{equation}

\subsection{Closed-form IA Precoder }
In \cite{Tse:feasiblity:2011}, Bresler \emph{et al.} prove that IA over MIMO constant channel is feasible if and only if the number of antennas satisfies $M \ge d(K+1)/2$ under the symmetric square case where all transmitters and receivers are equipped with the same number of antennas. { Moreover, the achievable DoF and feasibility of IA in asymmetric transmit-receive antennas have been studied in $3$-user interference channels \cite{Wang:IA:2011}.}  However, the transceiver designs of IA satisfying above feasibility condition are not explicitly addressed except for $K=3$ and global CSI is required at all transmit sides for the computation of IA precoders \cite{CadJafar:InterfAlignment:2007}. In  \cite{Tresch:PrecoderMIMO:2009}, the closed-form IA solution for a single data transmission has been proposed under the constraint of $K=M+1$. The main principle of closed-form IA is that the $(k+1)$-th and $(k+2)$-th IA precoders are designed for aligning the interfering signals from transmitter $(k+1)$ and $(k+2)$ in the same subspace at receiver $k$. Then, $(K-1)$ dimensional interference vectors lie in $K-2=M-1$ dimensional subspace at each receiver, which allows one dimensional interference-free link for each receiver.  Extending the closed-form IA solution for a single data stream to the case of multiple data streams transmission, we obtain the IA conditions as follows.
\begin{equation}\label{Eq:IA:ex_cond}
\begin{array}{l}
 \,\,\,\,\,\,\,\,\,\,\,\,\,\,\,\,\, \mathsf{span}\left( {{\bf{H}}_{}^{[12]}{{\bf{V}}^{[2]}}} \right)= \mathsf{span}\left( {{\bf{H}}_{}^{[13]}{{\bf{V}}^{[3]}}} \right)\,\,\,\,\,\,\,\,\,\,\,\,\,\,\,{\rm{at}}\,\,{\rm{receiver}}\,{\rm{1}} \\ 
\,\,\,\,\,\,\,\,\,\,\,\,\,\,\,\,\, \mathsf{span}\left( {{\bf{H}}_{}^{[23]}{{\bf{V}}^{[3]}}} \right)= \mathsf{span}\left( {{\bf{H}}_{}^{[24]}{{\bf{V}}^{[4]}}} \right)\,\,\,\,\,\,\,\,\,\,\,\,\,\,\,{\rm{at}}\,\,{\rm{receiver}}\,2 \\ 
\,\,\,\,\,\,\,\,\,\,\,\,\,\,\,\,\,\,\,\,\,\,\,\,\,\,\,\,\,\,\,\,\,\,\,\,\,\,\,\,\,\,\,\,\,\,\,\,\, \,\,\,\,\,\,\,\,\,\,\,\,\vdots  \\ 
\mathsf{span}\left( {{\bf{H}}_{}^{[(K-1)\, K]}{{\bf{V}}^{[K]}}} \right)= \mathsf{span}\left( {{\bf{H}}_{}^{[(K-1)\,1]}{{\bf{V}}^{[1]}}} \right)\,\,\,\,{\rm{at}}\,\,{\rm{receiver}}\,(K-1) \\ 
\,\,\,\,\,\,\,\,\,\,\,\,\,\, \mathsf{span}\left( {{\bf{H}}_{}^{[K1]}{{\bf{V}}^{[1]}}} \right) = \mathsf{span}\left( {{\bf{H}}_{}^{[K2]}{{\bf{V}}^{[2]}}} \right)\,\,\,\,\,\,\,\,\,\,\,\,\,\,{\rm{at}}\,\,{\rm{receiver}}\,K \\ 
 \end{array}
\end{equation}
where $\mathsf{span(\bf{A})}$ denotes the vector space that spanned by the columns of $\mathsf{\bf{A}}$. Note that ${\mathsf{span}}\left( {{{\bf{H}}^{[k-1\,k]}}{{\bf{V}}^{[k]}}} \right) = {\mathsf{span}}\left( {{{\bf{H}}^{[k-1\,k+1]}}{{\bf{V}}^{[k+1]}}} \right)$ $\Rightarrow$ ${\mathsf{span}}\left( {{{\bf{V}}^{[k+1]}}} \right) = {\mathsf{span}}\left( {{{\left( {{{\bf{H}}^{[k-1\,k+1]}}} \right)}^{ - 1}}{{\bf{H}}^{[k-1\,k]}}{{\bf{V}}^{[k]}}}\right)$ and $\left\{ {{\mathsf{span}}\left( {{{\bf{V}}^{[k]}}} \right)} \right\}_{k = 1}^K$ are concatenated with each other. From \eqref{Eq:IA:ex_cond}, IA precoders are computed by 
\begin{equation}\label{Eq:IA:GS}
\begin{array}{l}
 {\bf{V}}_{}^{[1]}  = d\rm{\,\,eigenvectors\,\, of}\left(\left({{\bf{H}}^{[(K-1)\,1]} } \right)^{ - 1} {\bf{H}}^{[(K-1)\,K]} \cdots \left( {{\bf{H}}^{[13]} } \right)^{ - 1} {\bf{H}}^{[12]}\left( {{\bf{H}}^{[K2]} } \right)^{ - 1} {\bf{H}}^{[K1]} \right) \\ 
 {\bf{V}}_{}^{[2]}  = \left( {{\bf{H}}^{[K2]} } \right)^{ - 1} {\bf{H}}^{[K1]} {\bf{V}}_{}^{[1]}  \\ 
  {\bf{V}}_{}^{[3]}  = \left( {{\bf{H}}^{[13]} } \right)^{ - 1} {\bf{H}}^{[12]} {\bf{V}}_{}^{[2]}  \\ 
\,\,\,\,\,\,\,\,\,\,\,\, \vdots  \\  
{\bf{V}}_{}^{[K]}  = \left( {{\bf{H}}^{[(K-2)\,K]} } \right)^{ - 1} {\bf{H}}^{[(K-2)\,(K-1)]} {\bf{V}}_{}^{[K-1]}  \\ 
 \end{array}
\end{equation}
and then each column of the precoders is normalized to have unit norm. From the design of IA precoders in \eqref{Eq:IA:GS}, total $(K-1)d$ dimensional interferers are shrunk into the $(K-2)d$ dimensional subspace at each receiver. Since the desired signals occupy $d$ dimensions of the $M$ dimensional receive space, the number of antennas should satisfy at least $M=(K-2)d+d$ for the proposed IA design. Given these antenna configurations, we can achieve $d$ DoF for each user under the design of ZF receive filter. 

\subsection{Feedback Structure}
 In the existing IA literature, the design of feedback topology is not explicitly addressed. Existing works  \cite{Thukral:LimitedSISO:2009},\cite{Varanasi:LimitedMIMO:2009} commonly assume that each receiver feeds back the estimated CSI to all transmitters. This corresponds to a \emph{full-feedback topology} as illustrated in Fig.~\ref{Fig:Topology:FullFeedback}. We consider the full-feedback topology as the conventional feedback approach for achieving IA and measure its efficiency as the CSI overhead
\begin{equation}\label{Eq:IA:metric}
N = \sum_{m, k\in \{1, 2, \cdots, K\}} \l(N_{\textsf{TR}}^{[mk]}+N_{\textsf{RT}}^{[mk]}\r)
\end{equation}
where  $N_{\textsf{TR}}^{[mk]}$ denotes the number of complex CSI coefficients sent from receiver $k$ to transmitter $m$ and  $N_{\textsf{RT}}^{[mk]}$ from transmitter $k$ to receiver $m$. { According to the feedback approach in \cite{Thukral:LimitedSISO:2009}, each receiver feeds back all interfering channels by broadcasting $(K-1) M^2$ complex coefficients to all other nodes assuming no errors. The total CSI overhead, namely the number of channel coefficients exchanged over the network, is given by
\begin{equation}\label{Prop:FF:Overhead}
N_{\textsf{FF}} = K(K-1) M^2
\end{equation}
where the overhead $N_{\textsf{FF}}$ increases as $\mathcal{O}(K^{2}M^{2})$ whereas the network throughput grows linearly with $K$. Thus the CSI overhead may outweigh the resultant throughput gain for large $K$.}
\begin{figure}[t]
\centering\includegraphics[width=11cm]{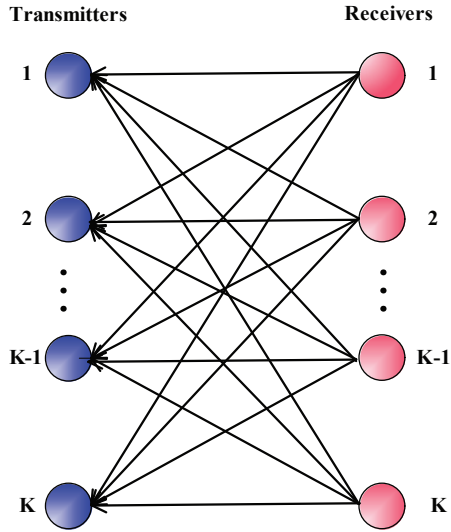}
\caption{Full-feedback  topology for achieving IA. }
\label{Fig:Topology:FullFeedback}
\end{figure} 

\section{CSI-Feedback Topologies}
The conventional IA technique potentially leads to unacceptable CSI-feedback overhead given the existence of many feedback links. To reduce the requirement of global CSI, the numerical methods have been proposed in the iterative algorithm which progressively  update the transmit/receive filters by using local channel knowledge at each node  \cite{Gomadam:iterativeMIMO:2008}. This iterative method  achieves the full DoF under the feasibility condition $M \ge d(K+1)/2$, but it results in a slow convergence rate that causes the huge amount of system overhead. In this section, we propose two practical CSI feedback topologies, namely  { the \emph{centralized-feedback} and \emph{CSI-exchange} topologies.}  The design of proposed feedback topologies build upon the closed-form IA solution in \eqref{Eq:IA:GS} and the efficiency is measured by the metric in \eqref{Eq:IA:metric}.  To implement the proposed feedback topologies, we make the following assumptions:
\begin{itemize}

 \item[1)] {Centralized-feedback  topology: In order to design the centralized-feedback topology, we assume that each receiver directly exchange the estimated CSI with others. This framework is feasible for the receivers who are located close together and linked with local area networks such as Wi-Fi [26], [27]. Moreover, the uplink coordinated multi-point (CoMP) system which provides the high-capacity backhaul links between base stations can be applicable for this scenario}.

 \item[2)] {CSI-exchange topology :} CSI can be exchanged in both direction between a transmitter and receiver through feedforward/feedback channels. The effect of quantization error due to the limited feedforward/feedback channels is discussed in next section.

\end{itemize}

\begin{figure}[t]
\centering\includegraphics[width=12cm]{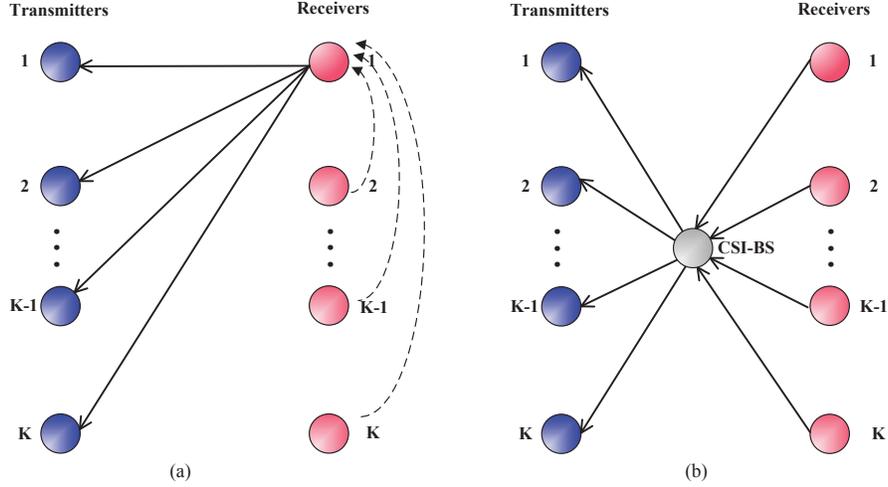}
\caption{Centralized-Feedback Topologies.}
\label{Fig:Topology:Centralized}
\end{figure}

\subsection{Centralized-Feedback Topology}

Consider $K=4$ user interference channel with $M=3d$. From \eqref{Eq:IA:GS}, the IA precoders ${\bf {V}}^{[1]}$,${\bf {V}}^{[2]}$, ${\bf {V}}^{[3]}$ and ${\bf {V}}^{[4]}$ are represented as
\begin{equation}\label{Eq:IA3:GS}
\begin{array}{l}
 {\bf{V}}_{}^{[1]}  = d\rm{\,\,eigenvectors\,\, of} \left(\left( {{\bf{H}}^{[31]} } \right)^{ - 1} {\bf{H}}^{[34]} \left( {{\bf{H}}^{[24]} } \right)^{ - 1} {\bf{H}}^{[23]}\left( {{\bf{H}}^{[13]} } \right)^{ - 1} {\bf{H}}^{[12]}\left( {{\bf{H}}^{[42]} } \right)^{ - 1} {\bf{H}}^{[41]} \right) \\ 
 {\bf{V}}_{}^{[2]}  = \left( {{\bf{H}}^{[42]} } \right)^{ - 1} {\bf{H}}^{[41]} {\bf{V}}_{}^{[1]}  \\ 
 {\bf{V}}_{}^{[3]}  = \left( {{\bf{H}}^{[13]} } \right)^{ - 1} {\bf{H}}^{[12]} {\bf{V}}_{}^{[2]}  \\ 
 {\bf{V}}_{}^{[4]}  = \left( {{\bf{H}}^{[24]} } \right)^{ - 1} {\bf{H}}^{[23]} {\bf{V}}_{}^{[3]}  \\ 
 \end{array}
\end{equation}
and normalized to unit norm at each column. { As shown in \eqref{Eq:IA3:GS}, the set of product channel matrices $\{ \left( {{\bf{H}}^{[13]} } \right)^{ - 1} {\bf{H}}^{[12]}$, $\left( {{\bf{H}}^{[24]} } \right)^{ - 1} {\bf{H}}^{[23]}$,  $\left( {{\bf{H}}^{[31]} } \right)^{ - 1} {\bf{H}}^{[34]}$, $\left( {{\bf{H}}^{[42]} } \right)^{ - 1} {\bf{H}}^{[41]} \}$  are commonly used for computing all IA precoders. By allowing CSI exchange between receivers, we propose the feedback topology where a particular receiver collects CSI from all other receivers, computes all precoders and send them to corresponding transmitters. This topology is called \emph{centralized-receiver feedback topology} as illustrated in Fig. \ref{Fig:Topology:Centralized} (a). Without loss of generality, let  receiver $1$ be the one that collects CSI form others to compute precoders. This results in two-hop feedback channels as follows: (i) the feedback channels that each of $(K-1)$ receivers sends the interfering matrix comprising $M^2$ coefficients to receiver $1$  and (ii) the feedback channels that receiver $1$  transmits a precoder of $Md$ coefficients to each of $K$ transmitters. Combining the overhead in (i) and (ii), we obtain the CSI  overhead in the centralized-receiver feedback topology as  
\begin{equation}\label{Prop:CF:Overhead}
N_{\textsf{CF}} = (K-1)M^2 + KMd.
\end{equation}}
 However,  a huge burden of computation and feedback overhead are centralized at receiver $1$ in the proposed topology. To address these issues, we propose the \emph {star feedback topology} illustrated in Fig.~\ref{Fig:Topology:Centralized} (b) and describe details in Algorithm \ref{Algorithm_star}. The star feedback topology comprises an agent, called the \emph{CSI base station} (CSI-BS)  which collects CSI from all receivers, computes all precoders using IA condition in \eqref{Eq:IA:ex_cond} and sends them back to corresponding transmitters. Similar to the centralized-receiver feedback, the CSI overhead for the star feedback topology is computed as
\begin{equation}\label{Prop:SF:Overhead}
N_{\textsf{SF}} = KM^2 + KMd.
\end{equation}

From \eqref{Prop:CF:Overhead} and \eqref{Prop:SF:Overhead}, the CSI overhead of centralized-feedback topologies is  scaled with $\mathcal{O}{(KM^2)}$.    

\begin{remark}
{ In the case that CSI sharing is valid for the transmitters, we design $\emph{centralized-transmitter}$ feedback, where the interfering channels from all receivers are fed back to the particular transmitter and then all precoders are computed and exchanged with other transmitters. While the computed precoders at receiver 1 are fed back to the corresponding transmitters in the centralized-receiver feedback topology, the centralized-transmitter feedback topology requires the feedback of interfering channel matrices from receivers to transmitters.}
\end{remark}
\begin{algorithm}[t!] \label{Algorithm_star}
\,\,
\textbf{1. Computation of ${\bf{V}}^{[1]} ,...,{\bf{V}}^{[K]}$} : {The CSI-BS collects ${\bf H}_{c}^{[k]}=\left({\bf H}^{[k \bar k]}\right)^{-1}{\bf H}^{[k \hat k]}$ from the receiver $k$, $\forall k$, where $\bar k = \textsf{mode}(k+1, K)+1$, $\hat k = \textsf{mode}(k, K)+1$ and \textsf{mod}$(n,k)$ represents the modulo operation.
}

\textbf{2. Broadcasting ${\bf{V}}^{[1]} ,...,{\bf{V}}^{[K]}$} : CSI-BS transmits  ${\bf{V}}^{[k]}$ to the corresponding transmitter $k$, $\forall k$. 
\,\,\,
\caption{Star feedback topology}
\end{algorithm}

\begin{figure}[t]
\centering\includegraphics[width=11cm]{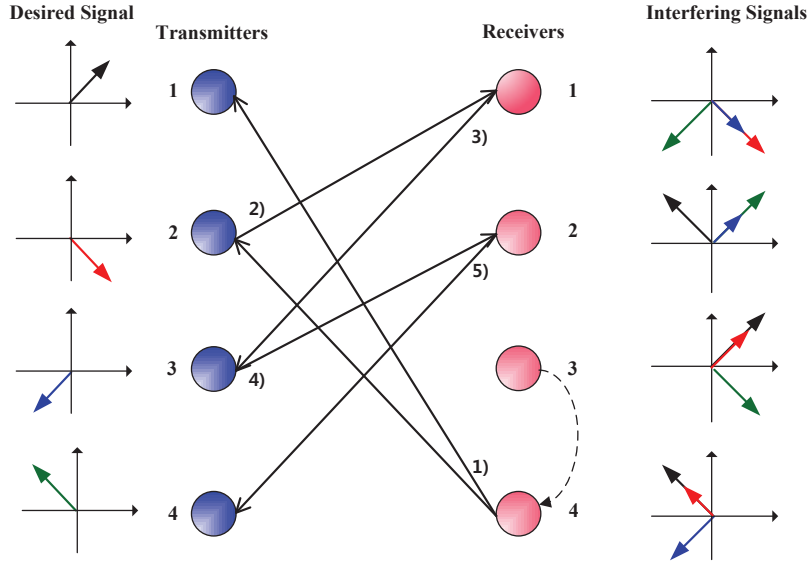}
\caption{CSI-exchange topology for achieving IA with $K$=$4$ and $M$=$3$.}
\label{Fig:Topology:CSI_exchange}
\end{figure}

\subsection{CSI-Exchange Topology $(K \ge 4)$}
In the centralized-feedback topology, ${\bf {V}}^{[1]}$ is solved by the eigenvalue problem that incorporates the channel matrices of all interfering links which causes a significant overhead for the case of many links or antennas. 
To reduce the CSI overhead for the computation of $ {\bf V}^{[1]}$, we design two interferers from transmitter $1$ and $2$ are aligned in the same subspace at receiver $(K-1)$ and $K$ as following conditions:
\begin{equation}\label{Eq:IA:Mex}
\begin{array}{l}
 \mathsf{span} ({\bf{H}}^{[(K-1)\,1]} {\bf{V}}^{[1]}) =\mathsf{span} ({\bf{H}}^{[(K-1)\,2]} {\bf{V}}^{[2]})\,\,\,\,\,{\rm{at}}\,\,{\rm{receiver}}\,(K-1) \\ 
\,\,\,\,\,\,\,\,\, \mathsf{span} ({\bf{H}}^{[K\,1]} {\bf{V}}^{[1]}) = \mathsf{span} ({\bf{H}}^{[K\,2]} {\bf{V}}^{[2]})\,\,\,\,\,\,\,\,\,\,\,\,\,\,\,{\rm{at}}\,\,{\rm{receiver}}\,K  \\
 \end{array}
\end{equation}
Substituting \eqref{Eq:IA:Mex} with last two conditions in \eqref{Eq:IA:ex_cond}, IA precoders ${\bf{V}}^{[1]},{\bf{V}}^{[2]},...,{\bf{V}}^{[K]}$ are modified as  
\begin{equation}\label{Eq:IA:M_V_1_K}
\begin{array}{l}
 {\bf{V}}_{}^{[1]}\,\,{\rm{ = }}\,\,\,d\rm{\,\,eigenvectors\,\, of}\,\left({\left({{{\bf{H}}^{[(K - 1)\,1]}}} \right)^{ - 1}}{{\bf{H}}^{[(K - 1)\,2]}}{\left( {{{\bf{H}}^{[K2]}}} \right)^{ - 1}}{{\bf{H}}^{[K1]}}\right) \\ 
 {\bf{V}}_{}^{[2]} = {\left( {{{\bf{H}}^{[K2]}}} \right)^{ - 1}}{{\bf{H}}^{[K1]}}{\bf{V}}_{}^{[1]} \\ 
 {\bf{V}}_{}^{[3]} = {\left( {{{\bf{H}}^{[13]}}} \right)^{ - 1}}{{\bf{H}}^{[12\,]}}{\bf{V}}_{}^{[2]} \\ 
\,\,\,\,\,\,\,\,\,\,\,\, \vdots  \\ 
 {\bf{V}}_{}^{[K]} = {\left( {{{\bf{H}}^{[(K - 2)\,K]}}} \right)^{ - 1}}{{\bf{H}}^{[(K - 2)\,(K - 1)]}}{\bf{V}}_{}^{[K - 1]} . \\ 
 \end{array}
\end{equation}
From \eqref{Eq:IA:M_V_1_K}, the $k$-th precoder ${\bf{V}}_{}^{[k]}$ is sequentially determined by the product of pre-determined ${\bf{V}}_{}^{[k-1]}$ and the estimated channel matrix ${{{\left( {{{\bf{H}}^{[(k-2)\,k]}}} \right)}^{ - 1}}{{\bf{H}}^{[(k-2)\,(k-1)]}}}$ at receiver $(k-2)$ for $k\ge 3$. These properties motivate the design of sequential CSI-exchange topology in Algorithm \ref{Exchange_1}, which only exchanges precoding matrices between transmitters and receivers after the computation of ${\bf{V}}^{[1]}$ and ${\bf{V}}^{[2]}$. { Fig.~\ref{Fig:Topology:CSI_exchange} illustrates the CSI-exchange topology for $K=4$ user interference channel and its procedure, ${R_4}\mathop  \to \limits^{{{\bf{V}}^{[1]}}}_{1)} {T_1}$, ${R_4}\mathop  \to \limits^{{{\bf{V}}^{[2]}}}_{1)} {T_2}\mathop  \to \limits^{{{\bf{V}}^{[2]}}}_{2)}  {R_1}\mathop  \to \limits^{{{\bf{V}}^{[3]}}}_{3)}  {T_3}\mathop  \to \limits^{{{\bf{V}}^{[3]}}}_{4)}  {R_2}\mathop  \to \limits^{{{\bf{V}}^{[4]}}}_{5)}  {T_4}$, }where $T_m$ and $R_n$ represent transmitter $m$ and receiver $n$, respectively.

\begin{algorithm}[t]\label{Exchange_1}
\,\,

\textbf{1. Computation of ${\bf{V}}^{[1]}$ and ${\bf{V}}^{[2]}$} 

 {Receiver $(K-1)$ forwards the matrix ${\bf{H}}_{e}^{[K-1]}$ to receiver $K$. Then, receiver $K$ computes ${\bf {V}}^{[1]}$ and  ${\bf {V}}^{[2]}$  using \eqref{Eq:IA:M_V_1_K} and feeds them back to the corresponding transmitter $1$ and $2$.\\}

\textbf{2. Exchange of precoders ${\bf{V}}^{[2]} ,...,{\bf{V}}^{[K]}$} \\
\, \For{$k$=$2$:$(K-1)$}{
Transmitter $k$ forwards ${\bf{V}}^{[k]}$ to receiver $(k-1)$. Then, receiver $(k-1)$ calculates ${\bf{V}}^{[k+1]}$ and feeds back to transmitter $(k+1)$.\\
}
\,\,\,

\,\,\,
\caption{CSI-exchange topology}
\end{algorithm}
\,\,

Let denote  ${\bf{H}}_{e}^{[K-1]} = {\left({{{\bf{H}}^{[(K - 1)\,1]}}} \right)^{ - 1}}{{\bf{H}}^{[(K - 1)\,2]}}$ and ${\bf{H}}_{e}^{[K]} ={\left( {{{\bf{H}}^{[K2]}}} \right)^{ - 1}}{{\bf{H}}^{[K1]}}$, respectively.
In Algorithm \ref{Exchange_1},  receiver $(K-1)$ transmits CSI of the product channel matrix ${\bf{H}}_{e}^{[K-1]}$ to receiver $K$, which comprises $M^2$ complex-valued coefficients. Using ${\bf{H}}_{e}^{[K-1]}$ and ${\bf{H}}_{e}^{[K]}$, receiver $K$ computes the $Md$ complex-valued  precoders ${\bf{V}}^{[1]} $ and ${\bf{V}}^{[2]}$, and feeds them back to the corresponding transmitter $1$ and $2$. Then each precoder is determined by iterative exchange of precoders between transmitters and interfered receivers. In each round of exchange, the number of nonzero coefficients of feedforward/feedback becomes $2Md$. Therefore, total CSI overhead in the CSI-exchange topology requires 
\begin{align}
N_{\textsf{EX}} &= M^2+ 2Md+2(K-2)Md \nonumber\\
&=M^2+2(K-1)Md. \label{Prop:MEX:Overhead}
\end{align}

 From \eqref{Prop:MEX:Overhead}, the proposed topology provides much less CSI overhead for achieving IA, namely on the order of $KM$, whereas the conventional feedback approach requires the feedback overhead of $K^2M^2$ order.  Comparing \eqref{Prop:MEX:Overhead} with \eqref{Prop:CF:Overhead} and \eqref{Prop:SF:Overhead},  the product channel matrices for computing ${\bf{V}}^{[1]}$ in the CSI-exchange topology requires constant $M^2$ overhead in any $K$ user cases while that of centralized-feedback topologies increase with $KM^2$.

\subsection {Comparison of Centralized-Feedback and CSI-Exchange Topology}
While the CSI-exchange topology degrades the amount of feedback overhead compared with the full-feedback topology, it  incur $2(K-1)$ iterations caused by the procedure of multiple-hop feedforward/feedback between the transmitter and the receiver. As the number of iterations is increased, the full DoF in $K$-user interference channel can not be achievable since it causes the time delay of transmission that results in significant interference misalignment for fast fading. However, the centralized-feedback topology is robust against channel variations  as it requires only two time slots for the computation of IA precoders in any number of user $K$.  Compared with the CSI-exchange topology, CSI-BS that connects all pairs of transmitter-receiver should be implemented as the additional costs in the centralized-feedback topology. In addition, the feedback overhead is increased with $\mathcal O \left(KM^2\right)$ which is larger than $\mathcal O \left(KM\right)$ in the CSI-exchange topology.
  Fig. \ref{Fig:Topologyr:Overhead} compares the CSI overhead of proposed feedback topologies for $d=1$ scenarios. We figure out that the full-feedback topology provides dramatic CSI overhead compared with the proposed feedback topologies, while centralized-feedback topologies show slightly larger overhead than that of CSI-exchange topology. 
{
\begin{figure}[t]
\centering\includegraphics[width=12cm]{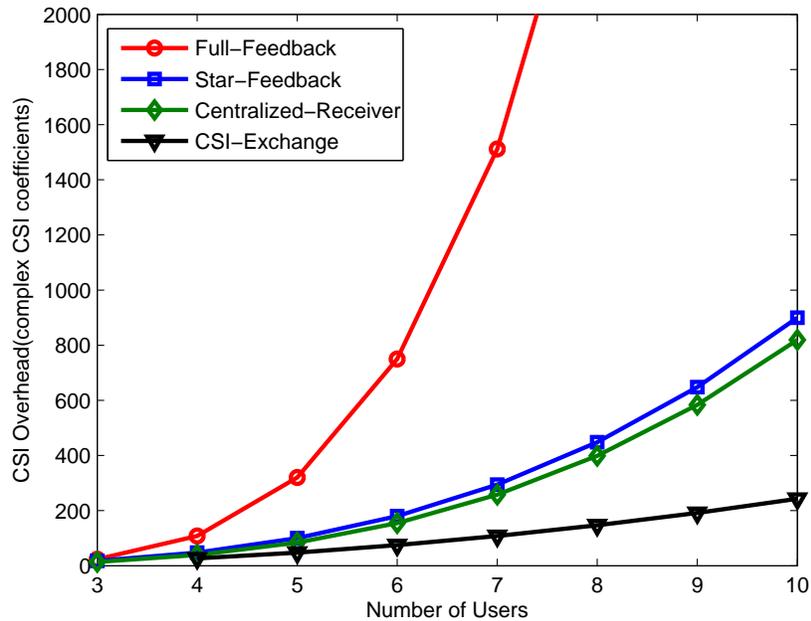}
\caption{{ Comparison of CSI-feedback overhead in the proposed topologies.}}
\label{Fig:Topologyr:Overhead}
\end{figure} 
}

\section{Effect of CSI-Feedback Quantization} 
The proposed feedback topologies are designed under the assumption of perfect CSI exchange in the preceding section. However, in practical communication systems, CSI is quantized at each receiver and sent back to the corresponding transmitter through the finite-rate feedback constraints which causes the performance degradation due to the residual interference at receive sides. In this section, we analyze the throughput loss in the proposed feedback topologies due to the limited feedback channels \cite{Jindal:BD:2008}. The RVQ is used for CSI quantization and a single data transmission $d=1$ is considered  for analytical simplicity. 

\subsection{Throughput Loss Analysis}

Prior to deriving the throughput loss, we quantify the quantization error with RVQ using the distortion measure. Let denote a ${M \times 1}$ beamformer ${\bf{v}}^{[k]}$ at transmitter $k$, satisfying ${\left\| {\bf{v}}^{[k]} \right\|^2}=1$ and the quantization codebook  $\mathcal{W}$ known to both transmitters and receivers. Given $B_k$ feedback-bits, the codebook $\mathcal{W}$ consists of $2^{B_k}$ independently selected random vectors from the isotropic distribution on the $M$ dimensional complex unit sphere, where $\mathcal{W} = \left\{ {{{{\bf{\hat v}}}}_1,...,{{{\bf{\hat v}}}_{{2^{B_k}}}}} \right\}$. The quantized beamformer ${\bf {\hat v}}^{[k]}$ is selected by the minimal chordal distance metric:
\begin{equation} \label{Eq:IA_Chordal}
 {\bf{\hat v}}^{[k]} = \mathop {\arg \min }\limits_{{{{\bf{\hat v}}}_i} \in \mathcal{W}} {d^2}\left( {{\bf{v}}^{[k]},{{{\bf{\hat v}}}_i}} \right),
 \end{equation} 
where $d\left( {{\bf{v}}^{[k]},{{{\bf{\hat v}}}_i}} \right) = \sin \theta_k= \sqrt {1 - \left| {{{\bf{v}}^{[k]\dag} }{{{\bf{\hat v}}}_i}} \right|^2}$ and $\theta_k $ denotes the principle angle between ${\bf{v}}^{[k]}$ and ${\bf{\hat v}}_i$.
{Using the quantized beamformer ${\bf{\hat v}}^{[k]}$, we model the ${\bf{v}}^{[k]}$ as
\begin{align}
{\bf{\hat v}}^{[k]} &= (\cos \theta_k) {\bf{v}}^{[k]}\, + (\sin \theta_k) \Delta {\bf{v}}^{[k]} \nonumber \\
 &= \sqrt {1-\sigma_k} \, {\bf{v}}^{[k]}\, + \sqrt \sigma_k \Delta {\bf{v}}^{[k]}, \label{Eq:IA:h_model}
\end{align}            		         
where $\Delta {\bf{v}}^{[k]}$ represents the quantization error of ${\bf{v}}^{[k]}$ with unit norm and  $\sigma_k { \buildrel \Delta \over =} \sin^2\theta_k$. From \cite{Dai:Bound:2005}, the upper-bound of quantization distortion is given by $\mathbb E\left[ \sigma_k \right]\le \bar \Gamma(M) \cdot 2^{-{{B_k}\over {M-1}}}$, where $\bar \Gamma(M)={{\Gamma({1 \over {M-1}})}\over{ M-1}}$ and $\Gamma (x)$ represents the gamma function of $x$.}

 Let denote ${\mathbf{\hat{v}}}^{[j]}$ and ${\mathbf{\hat{r}}}^{[k]}$ as the transmit beamformer and receive filter calculated in the presence of CSI quantization errors. Then, we write the residual interference power at  receiver $k$ as
 \begin{equation} 
{{\hat I}^{[k]}} = \sum_{k \ne j} Pd_{kj}^{ - \alpha }{\left| {{{{\bf{\hat r}}}^{[k]\dag }}{{\bf{H}}^{[kj]}}{{{\bf{\hat v}}}^{[j]}}} \right|^2}
\end{equation}
and the sum throughput as 
\begin{equation}\label{Eq:IA:R_sum}
R_{\textsf{limited}} = \sum\limits_{k = 1}^K {{{\log }_2}} \left( {1 + \frac{{{P}{d_{kk}^{-\alpha} }{{\left| {{{{\bf{\hat r}}}^{[k]\dag }}{{\bf{H}}^{[kk]}}{{{\bf{\hat v}}}^{[k]}}} \right|}^2}}}{{{\hat I}^{[k]}   + 1}}} \right).
\end{equation}
In this subsection, we derive the throughput loss $\Delta R_{sum}$, which represents the difference between the sum throughput by perfect CSIT-based IA and limited feedback-based IA: $\Delta R_{\textsf{sum}} = {\mathbb E_{\bf{H},{\mathcal{W}}}}[R_{\textsf{perfect}}-R_{\textsf{limited}}]$.  Then, the throughput loss is upper bounded as 
\begin{equation}\label{Eq:IA:Delta_R_SUM}
\begin{array}{l}
 \Delta {R_{{\rm{sum}}}} = \,\,{\mathbb E_{\bf{H}}}\left[ {\sum\limits_{k = 1}^K {{{\log }_2}\left( {1 + Pd_{kk}^{ - \alpha }{{\left| {{{\bf{r}}^{[k]\dag }}{{\bf{H}}^{[kk]}}{{\bf{v}}^{[k]}}} \right|}^2}} \right)} } \right] - {\mathbb E_{{\bf{H}},{\mathop{\mathcal W}\nolimits} }}\left[ {\sum\limits_{k = 1}^K {{{\log }_2}\left( {1 + \frac{{Pd_{kk}^{ - \alpha }{{\left| {{{{\bf{\hat r}}}^{[k]\dag }}{{\bf{H}}^{[kk]}}{{{\bf{\hat v}}}^{[k]}}} \right|}^2}}}{{1 + {{\hat I}^{[k]}}}}} \right)} } \right] \\ 
 \,\,\, \,\, \,\,\,\,\,\,\,\,\,\,\,\,\,\,\, = \,\,{\mathbb E_{\bf{H}}}\left[ {\sum\limits_{k = 1}^K {{{\log }_2}\left( {1 + Pd_{kk}^{ - \alpha }{{\left| {{{\bf{r}}^{[k]\dag }}{{\bf{H}}^{[kk]}}{{\bf{v}}^{[k]}}} \right|}^2}} \right)} } \right] \\ 
 \,\,\, \,\,\, \,\,\, \,\,\,\,\,\,\,\,\,\,\,\,\,\,\, - {\mathbb E_{{\bf{H}},{\mathop{\mathcal W}\nolimits} }}\left[ {\sum\limits_{k = 1}^K {{{\log }_2}\left( {1 + {{\hat I}^{[k]}} + Pd_{kk}^{ - \alpha }{{\left| {{{{\bf{\hat r}}}^{[k]\dag }}{{\bf{H}}^{[kk]}}{{{\bf{\hat v}}}^{[k]}}} \right|}^2}} \right)} } \right] + {\mathbb E_{{\bf{H}},{\mathop{\mathcal W}\nolimits} }}\left[ {\sum\limits_{k = 1}^K {{{\log }_2}\left( {1 + {{\hat I}^{[k]}}} \right)} } \right] \\ 
 \,\,\,  \,\,\,\,\,\,\,\,\,\,\,\,\,\,\,\,\,\,\mathop { \le \,}\limits^{} {\mathbb E_{\bf{H}}}\left[ {\sum\limits_{k = 1}^K {{{\log }_2}\left( {1 + Pd_{kk}^{ - \alpha }{{\left| {{{\bf{r}}^{[k]\dag }}{{\bf{H}}^{[kk]}}{{\bf{v}}^{[k]}}} \right|}^2}} \right)} } \right]  \\ 
 \,\,\,\,\,\,\,\,\,\,\,\,\,\,\,\,\,\,\,\,\,\,\,\,-{\mathbb E_{{\bf{H}},{\mathop{\mathcal W}\nolimits} }}\left[ {\sum\limits_{k = 1}^K {{{\log }_2}\left( {1 + Pd_{kk}^{ - \alpha }{{\left| {{{{\bf{\hat r}}}^{[k]\dag }}{{\bf{H}}^{[kk]}}{{{\bf{\hat v}}}^{[k]}}} \right|}^2}} \right)} } \right] + {\mathbb E_{{\bf{H}},{\mathop{\mathcal W}\nolimits} }}\left[ {\sum\limits_{k = 1}^K {{{\log }_2}\left( {1 + {{\hat I}^{[k]}}} \right)} } \right] \\ 
 \,\,\, \,\,\, \,\,\,\,\,\,\,\,\,\,\,\,\,\,\,\mathop  = \limits^{(a)} {\mathbb E_{{\bf{H}},{\mathop{\mathcal W}\nolimits} }}\left[ {\sum\limits_{k = 1}^K {{{\log }_2}\left( {1 + {{\hat I}^{[k]}}} \right)} } \right] \\ 
 \,\,\, \,\,\,\,\,\,\,\,\,\,\,\,\,\,\,\,\,\,\mathop { \le \,}\limits^{(b)} {\mathbb E_{{\bf{H}},{\mathop{\mathcal W}\nolimits} }}\left[ {K \cdot {{\log }_2}\left( {1 + \frac{1}{K}\sum\nolimits_{k = 1}^K {{{\hat I}^{[k]}}} } \right)} \right] \\ 
 \end{array}
\end{equation} 
where (a) follows the fact that ${\bf{v}}^{[k]}$, ${\bf{r}}^{[k]}$, ${\bf{\hat v}}^{[k]}$ and ${\bf{\hat r}}^{[k]}$ are independently distributed in $\mathbb{C}^{M \times 1}$  and (b) uses the characteristic of concave function, $\log(x)$. 
Applying Jensen's inequality to the upper-bound in \eqref{Eq:IA:Delta_R_SUM}, the throughput loss is upper-bounded by
\begin{equation}\label{Eq:IA:Delta_bound}
\Delta R_{\textsf{sum}}^{}\mathop  \le \limits^{} K\cdot{\log _2}\left(1+{\frac{1}{K}{\mathbb E_{{\bf{H}},{\mathcal{W}}}}\left[ {\sum\nolimits_{k = 1}^K {{{\hat I}^{[k]}}} } \right]} \right).
\end{equation}
This bound  explains that the throughput loss is logarithmically increased with the sum of residual interference. To minimize the throughput loss due to the quantization error, we analyze the residual interference at each receiver and regulate it by utilizing a variable feedback-bits allocation schemes in following sections.
\subsection{Residual Interference Relative to Quantization Error}
\subsubsection{Centralized-Feedback  Topology}
{  In the centralized-receiver (star) feedback topology, we assume that all receivers are connected to receiver 1 (CSI-BS) with high-capacity backhaul links which allows receiver 1 (CSI-BS) to acquire full knowledge of CSI estimated at each receiver. Then, receiver 1 (CSI-BS) can compute ${\bf{v}}^{[1]},{\bf{v}}^{[2]},...,{\bf{v}}^{[K]}$ using~\eref{Eq:IA:GS} and forwards them to the corresponding transmitters for achieving IA. Given that the feedback of ${\bf{v}}^{[k]}$ has $B_k$ bits, the expected residual interference at each receiver can be upper bounded as shown below.

\begin{proposition}\label{Prop:STAR:E_SUM_I}
In the centralized-feedback topology, the expected residual interference at each receiver can be upper bounded as
\begin{equation}\label{Eq:STAR:E_1_K}
{\mathbb E_\mathcal{{\bf {H}},  W}}\left[ {{{\hat I}^{[k]}}} \right] \le {\bar \Gamma(M)}\cdot \left( Pd_{k \hat k}^{ - \alpha } \cdot M^2 \cdot 2^{- {B_{\hat k}\over {M-1}}} + Pd_{k \bar k}^{ - \alpha }\cdot M^2 \cdot 2^{- {B_{\bar k}\over {M-1}}}\right), \,\, \forall \, k
\end{equation}
 given the number of feedback bits $\left\{ {{B_{k}}} \right\}_{k = 1}^K$.
\end{proposition}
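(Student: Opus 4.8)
The plan is to show that, although every precoder $\mathbf{v}^{[j]}$ is quantized, the zero-forcing receive filter at receiver $k$ cancels the interference from all transmitters except the \emph{aligned pair} $\hat k$ and $\bar k$, so that the residual interference collapses to only two terms driven by the quantization errors $\sigma_{\hat k}$ and $\sigma_{\bar k}$. Recall that for $d=1$ and $M=K-1$ the receiver has a $(K-2)$-dimensional nulling subspace, that the $K-3$ ``generic'' interferers $j\neq k,\hat k,\bar k$ supply $K-3$ distinct directions $\mathbf{H}^{[kj]}\hat{\mathbf{v}}^{[j]}$, and that under perfect CSI the pair $\hat k,\bar k$ is aligned, i.e. $\mathsf{span}(\mathbf{H}^{[k\hat k]}\mathbf{v}^{[\hat k]})=\mathsf{span}(\mathbf{H}^{[k\bar k]}\mathbf{v}^{[\bar k]})$. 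First I would take $\hat{\mathbf{r}}^{[k]}$ to be the unit vector orthogonal to these $K-3$ quantized generic directions together with the nominal aligned direction $\mathbf{H}^{[k\hat k]}\mathbf{v}^{[\hat k]}$; this is exactly $K-2$ constraints, leaving a one-dimensional solution, and by the alignment identity it simultaneously annihilates $\mathbf{H}^{[k\bar k]}\mathbf{v}^{[\bar k]}$.

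With this filter the generic terms vanish identically, leaving $\hat I^{[k]}=Pd_{k\hat k}^{-\alpha}\,|\hat{\mathbf{r}}^{[k]\dagger}\mathbf{H}^{[k\hat k]}\hat{\mathbf{v}}^{[\hat k]}|^2+Pd_{k\bar k}^{-\alpha}\,|\hat{\mathbf{r}}^{[k]\dagger}\mathbf{H}^{[k\bar k]}\hat{\mathbf{v}}^{[\bar k]}|^2$. Next I would substitute the RVQ error model \eqref{Eq:IA:h_model}, $\hat{\mathbf{v}}^{[j]}=\sqrt{1-\sigma_j}\,\mathbf{v}^{[j]}+\sqrt{\sigma_j}\,\Delta\mathbf{v}^{[j]}$, into each surviving term. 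Because $\hat{\mathbf{r}}^{[k]\dagger}\mathbf{H}^{[k\hat k]}\mathbf{v}^{[\hat k]}=0$ and $\hat{\mathbf{r}}^{[k]\dagger}\mathbf{H}^{[k\bar k]}\mathbf{v}^{[\bar k]}=0$ by construction, the $\sqrt{1-\sigma}$ parts drop out and I am left with $\hat I^{[k]}=Pd_{k\hat k}^{-\alpha}\sigma_{\hat k}\,|\hat{\mathbf{r}}^{[k]\dagger}\mathbf{H}^{[k\hat k]}\Delta\mathbf{v}^{[\hat k]}|^2+Pd_{k\bar k}^{-\alpha}\sigma_{\bar k}\,|\hat{\mathbf{r}}^{[k]\dagger}\mathbf{H}^{[k\bar k]}\Delta\mathbf{v}^{[\bar k]}|^2$, which already isolates the two quantization errors appearing in the claim.

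Finally I would take $\mathbb{E}_{\mathbf{H},\mathcal{W}}[\cdot]$ of both terms. For each I would bound the channel quadratic form by Cauchy--Schwarz and the Frobenius norm, $|\hat{\mathbf{r}}^{[k]\dagger}\mathbf{H}^{[kj]}\Delta\mathbf{v}^{[j]}|^2\le\|\hat{\mathbf{r}}^{[k]}\|^2\,\|\mathbf{H}^{[kj]}\|_F^2\,\|\Delta\mathbf{v}^{[j]}\|^2=\|\mathbf{H}^{[kj]}\|_F^2$, using the unit norms of $\hat{\mathbf{r}}^{[k]}$ and $\Delta\mathbf{v}^{[j]}$. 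Invoking the independence and isotropy of $\{\mathbf{v}^{[k]},\mathbf{r}^{[k]},\hat{\mathbf{v}}^{[k]},\hat{\mathbf{r}}^{[k]}\}$ already used in step (a) of \eqref{Eq:IA:Delta_R_SUM}, the distortion $\sigma_j$ is independent of the direct channel $\mathbf{H}^{[kj]}$, so the expectation factors as $\mathbb{E}[\sigma_j]\,\mathbb{E}[\|\mathbf{H}^{[kj]}\|_F^2]=\mathbb{E}[\sigma_j]\cdot M^2$, since $\mathbf{H}^{[kj]}$ has $M^2$ i.i.d. $\mathcal{CN}(0,1)$ entries. Substituting the RVQ distortion bound $\mathbb{E}[\sigma_j]\le\bar\Gamma(M)\,2^{-B_j/(M-1)}$ from \cite{Dai:Bound:2005} for $j\in\{\hat k,\bar k\}$ then yields exactly \eqref{Eq:STAR:E_1_K}.

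The main obstacle I anticipate is the first step: rigorously arguing that the receive filter annihilates every generic interferer \emph{and} the nominal aligned direction, so that precisely two residual terms remain. This rests on the interference subspace having dimension exactly $K-2$ under perfect alignment and on the receiver being able to null the quantized generic directions while steering its remaining degree of freedom onto the aligned subspace. The later steps (the RVQ substitution, the Cauchy--Schwarz/Frobenius bound, and the factoring of the expectation) are routine given the independence assumptions inherited from \eqref{Eq:IA:Delta_R_SUM}; note that the $M^2$ factor comes from the deliberately loose Frobenius bound, and a sharper argument would replace it by $M$.
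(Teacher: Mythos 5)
Your proposal is correct and follows essentially the same route as the paper's proof: the same receive-filter construction (nulling the $K-3$ quantized generic directions plus the nominal aligned direction ${\bf v}_r^{[k]}$, which by the alignment identity also kills the companion term), the same RVQ substitution isolating $\sigma_{\hat k}$ and $\sigma_{\bar k}$, and the same Cauchy--Schwarz/Frobenius bound with $\mathbb{E}\l[\|{\bf H}^{[kj]}\|_F^2\r]=M^2$ and the distortion bound $\mathbb{E}[\sigma_j]\le\bar\Gamma(M)2^{-B_j/(M-1)}$. Your closing observations (that the key step is the filter design and that the $M^2$ factor is an artifact of the loose Frobenius bound) are accurate but do not change the argument, which matches Appendix~\ref{1}.
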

\begin{proof}
See Appendix~\ref{1}.
\end{proof}
}
In Proposition \ref{Prop:STAR:E_SUM_I}, the residual interference at receiver $k$ is generated by the misalignment between interference from transmitter $\hat k$ and $\bar k$. The upper-bound of expected residual interference varies from the number of feedback-bits at each receiver and the path-loss between the pairs of transmitter-receiver.

\subsubsection {CSI-Exchange Topology}
 Under the finite-rate feedforward/feedback channels between transmitters and receivers, we analyze the residual interference in  CSI-exchange topology that consists of two types of CSI exchange links: (i) exchange of the channel matrix between receiver $(K-1)$ and $K$ and (ii) sequential exchange of quantized beamformer between transmitters and receivers through feedforward/feedback links. For tractability, we assume that receiver $(K-1)$ and $K$ are located in local areas and linked with high-capacity Wi-Fi links \cite{Shin:UserCoop:2010, Draft:80216:2010}. This connectivity is feasible in the integrated heterogeneous network (e.g., Wi-Fi / cellular) which provides user cooperation in short-range area so that the perfect CSI sharing is allowed to both receivers \cite{Nusairat:Hetero:2007}.  Then, receiver $K$ computes both ${{\bf{v}}^{[1]}}$ and ${{\bf{v}}^{[2]}}$, satisfying {${\bf{v}}_{r}^{[K]}={\bf{H}}^{[K1]}{{\bf{v}}^{[1]}}={\bf{H}}^{[K2]}{{\bf{v}}^{[2]}}$ and ${\bf{v}}_{r}^{[K-1]}={\bf{H}}^{[(K-1)\,1]}{{\bf{v}}^{[1]}}={\bf{H}}^{[(K-1)\,2]}{{\bf{v}}^{[2]}}$}, and feeds back the quantized beamformer ${\bf{\hat v}}_{}^{[1]}$ and ${\bf{\hat v}}_{}^{[2]}$ to transmitter $1$ and $2$, chosen according to \eqref{Eq:IA_Chordal}. Next, transmitter $2$ forwards ${\bf{\hat v}}_{}^{[2]}$ to receiver $1$.  For ${\bf{\hat v}}_{}^{[k]}$, $k=3,...,K$, receiver $(k-2)$  sequentially designs ${\bf{ v}}_{}^{[k]}$ to be aligned with {${\bf v}_r^{[k-2]}={{{\bf{H}}^{[(k-2)\, (k-1)]}}} {\bf{\hat v}}_{}^{[k-1]}$} where $span\left( {\bf{ v}}_{}^{[k]}\right)=span\left( {{\left( {{{\bf{H}}^{[(k-2)\,k]}}} \right)}^{ - 1}{\bf v}_r^{[k-2]} }\right)$ and feeds back the quantized ${{\bf{\hat v}}^{[k]}}$ to transmitter $k$.  In following Proposition, the upper-bound of residual interference averaged over all random choices of codebooks and channels are derived in the CSI-Exchange topology.
 {
\begin{proposition}\label{Prop:H:E_SUM_I}
In  the CSI-exchange topology, the expected residual interference at each receiver can be upper bounded as 
\begin{equation}\label{Eq:IA:E_I_k}
\left\{ \begin{array}{l}
 {\mathbb E_{ {\bf H}, \mathcal{W}}}\left[ {{{\hat I}^{[k]}}} \right]\,\,\,\,\,\,\, \le {\bar \Gamma(M)}\cdot \left( Pd_{k\,(k+2)}^{ - \alpha }\cdot M^2 \cdot 2^{- {B_{k+2} \over {M-1}}}\right ),\,\,\,\,\,k = 1,...,(K - 2) \\ 
 {\mathbb E_{ {\bf H}, \mathcal{W}}}\left[{{{\hat I}^{[K - 1]}}} \right] \le {\bar \Gamma(M)}\cdot \left(Pd_{(K-1) \,1}^{ - \alpha } \cdot M^2 \cdot 2^{- {B_{1}\over {M-1}}} + Pd_{(K-1)\, 2}^{ - \alpha }\cdot M^2 \cdot 2^{- {B_{2}\over {M-1}}} \right) \\  
  {\mathbb E_{ {\bf H}, \mathcal{W}}}\left[{{{\hat I}^{[K]}}} \right] \,\,\,\,\, \le {\bar \Gamma(M)}\cdot \left(Pd_{K 1}^{ - \alpha } \cdot M^2 \cdot 2^{- {B_{1}\over {M-1}}} + Pd_{K 2}^{ - \alpha }\cdot M^2 \cdot 2^{- {B_{2}\over {M-1}}}\right) \\ 
  \end{array} \right.
\end{equation} 
given the number of feedback bits $\left\{ {{B_{k}}} \right\}_{k = 1}^K$.
\end{proposition}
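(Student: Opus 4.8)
The plan is to write the residual interference at each receiver as the leakage of one (or two) quantized beamformers through a zero-forcing receive filter, and then to decouple the random quantization magnitude $\sigma$ from the random channel gain so that the distortion bound $\mathbb{E}[\sigma_k]\le\bar\Gamma(M)2^{-B_k/(M-1)}$ can be applied term by term. First I would fix a receiver $k\le K-2$, whose alignment condition is inherited unchanged from \eqref{Eq:IA:ex_cond} (only the receiver-$(K-1)$ and receiver-$K$ conditions are modified in \eqref{Eq:IA:Mex}), so the aligned pair is $(k+1,k+2)$. By the sequential construction in Algorithm~\ref{Exchange_1}, receiver $k$ forms ${\bf v}^{[k+2]}$ to satisfy ${\bf H}^{[k\,(k+2)]}{\bf v}^{[k+2]}={\bf H}^{[k\,(k+1)]}\hat{\bf v}^{[k+1]}$ \emph{exactly}, i.e. it fits the new interferer to the \emph{already quantized} copy $\hat{\bf v}^{[k+1]}$. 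Consequently the $M-1$ directions annihilated by $\hat{\bf r}^{[k]}$ — the common aligned direction ${\bf u}_k:={\bf H}^{[k\,(k+1)]}\hat{\bf v}^{[k+1]}$ together with the $K-3$ clean interferers ${\bf H}^{[kj]}\hat{\bf v}^{[j]}$, $j\ne k,k+1,k+2$ — are all nulled, and the only surviving contribution is the one produced by quantizing ${\bf v}^{[k+2]}$ into $\hat{\bf v}^{[k+2]}$.

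Substituting the RVQ model \eqref{Eq:IA:h_model}, $\hat{\bf v}^{[k+2]}=\sqrt{1-\sigma_{k+2}}\,{\bf v}^{[k+2]}+\sqrt{\sigma_{k+2}}\,\Delta{\bf v}^{[k+2]}$, and using $\hat{\bf r}^{[k]\dag}{\bf u}_k=0$ (note ${\bf H}^{[k\,(k+2)]}{\bf v}^{[k+2]}$ is parallel to ${\bf u}_k$) collapses the leakage to the clean identity $\hat I^{[k]}=Pd_{k\,(k+2)}^{-\alpha}\,\sigma_{k+2}\,|\hat{\bf r}^{[k]\dag}{\bf H}^{[k\,(k+2)]}\Delta{\bf v}^{[k+2]}|^2$. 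I would then bound the quadratic form by the worst-case pointwise inequality $|\hat{\bf r}^{[k]\dag}{\bf H}^{[k\,(k+2)]}\Delta{\bf v}^{[k+2]}|^2\le\|{\bf H}^{[k\,(k+2)]}\|_F^2$, valid for unit vectors regardless of how $\hat{\bf r}^{[k]}$ and $\Delta{\bf v}^{[k+2]}$ depend on the channel. Because the RVQ codebook is isotropic, the law of the distortion $\sigma_{k+2}$ is invariant to the vector being quantized, so conditioned on ${\bf v}^{[k+2]}$ its mean is the constant $\mathbb{E}[\sigma_{k+2}]$ and it decouples from ${\bf H}^{[k\,(k+2)]}$; hence $\mathbb{E}[\sigma_{k+2}\|{\bf H}^{[k\,(k+2)]}\|_F^2]=\mathbb{E}[\sigma_{k+2}]\,\mathbb{E}[\|{\bf H}^{[k\,(k+2)]}\|_F^2]=M^2\,\mathbb{E}[\sigma_{k+2}]$, using $\mathbb{E}[\|{\bf H}\|_F^2]=M^2$ for an $M\times M$ matrix of unit-variance entries. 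Inserting $\mathbb{E}[\sigma_{k+2}]\le\bar\Gamma(M)2^{-B_{k+2}/(M-1)}$ gives the first line of \eqref{Eq:IA:E_I_k}.

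For receivers $K-1$ and $K$ the same mechanism applies, but the modified conditions \eqref{Eq:IA:Mex} align transmitters $1$ and $2$, and receiver $K$ quantizes ${\bf v}^{[1]}$ and ${\bf v}^{[2]}$ \emph{independently} (neither is fitted to the other's quantized copy). Writing ${\bf u}={\bf H}^{[K1]}{\bf v}^{[1]}={\bf H}^{[K2]}{\bf v}^{[2]}$ for the intended aligned direction and nulling it with $\hat{\bf r}^{[K]}$, the residual splits into two pieces, $\hat I^{[K]}=Pd_{K1}^{-\alpha}\sigma_1|\hat{\bf r}^{[K]\dag}{\bf H}^{[K1]}\Delta{\bf v}^{[1]}|^2+Pd_{K2}^{-\alpha}\sigma_2|\hat{\bf r}^{[K]\dag}{\bf H}^{[K2]}\Delta{\bf v}^{[2]}|^2$; bounding each piece exactly as above yields the third line of \eqref{Eq:IA:E_I_k}, and receiver $K-1$ is identical with the distances $d_{(K-1)1},d_{(K-1)2}$, giving the second line.

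The step I expect to be the main obstacle is controlling the statistical coupling inside the leakage term: ${\bf v}^{[k+2]}$ is built from $({\bf H}^{[k\,(k+2)]})^{-1}$, so its error direction $\Delta{\bf v}^{[k+2]}$ is correlated with the very matrix ${\bf H}^{[k\,(k+2)]}$ it is multiplied by, and $\hat{\bf r}^{[k]}$ shares randomness with the interfering channels as well, which blocks a naive factorization of the expectation. The resolution I rely on is twofold. The Frobenius-norm bound is a \emph{pointwise} worst-case inequality and is therefore immune to any such correlation, trading tightness (it is responsible for the $M^2$ factor rather than a smaller constant) for rigor. The isotropy of the RVQ codebook then makes the \emph{magnitude} $\sigma$ statistically independent of the channel realization, which is precisely what is needed to pull $\mathbb{E}[\sigma]$ out of the expectation and close the bound via $\mathbb{E}[\sigma]\le\bar\Gamma(M)2^{-B/(M-1)}$.
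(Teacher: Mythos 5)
Your proof is correct and follows essentially the same route as the paper's Appendix B: the same case split (a single misaligned interferer at receivers $1,\dots,K-2$ because ${\bf v}^{[k+2]}$ is fitted to the already-quantized $\hat{\bf v}^{[k+1]}$, versus two misaligned interferers at receivers $K-1$ and $K$), the same substitution of the RVQ error model with the ZF filter nulling the reference direction, and the same Cauchy--Schwarz/Frobenius bound combined with $\mathbb{E}\left[\|{\bf H}\|_F^2\right]=M^2$ and $\mathbb{E}\left[\sigma\right]\le\bar\Gamma(M)2^{-B/(M-1)}$. Your explicit isotropy argument for decoupling $\sigma$ from the channel realization is a detail the paper leaves implicit, but the proof is the same in substance.
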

\begin{proof}
See Appendix~\ref{2}.
\end{proof}
}

In Proposition \ref{Prop:H:E_SUM_I}, the expected residual interference in the CSI-exchange topology is characterized as a function of the feedback-bits at each receiver  and path-loss between the pairs of transmitter-receiver. Since ${\bf { v}}^{[1]}$ and ${\bf { v}}^{[2]}$  are designed based on the IA condition \eqref{Eq:IA:Mex}, both quantized ${\bf {\hat v}}^{[1]}$ and ${\bf {\hat v}}^{[2]}$ affect the residual interference at receiver $(K-1)$ and $K$ in \eqref{Eq:IA:E_I_k}. However, other receiver $k$ sequentially designs ${\bf {\hat v}}^{[k+2]}$ based on the pre-determined ${\bf {\hat v}}^{[k+1]}$ so that the interference at receiver $k$ is only affected by the quantization error of  ${\bf {\hat v}}^{[k+2]}$, respectively. Fig. \ref{Fig:Residual:R1} depicts the residual interference at receiver $1$ and $K$ in the CSI-exchange topology, as an example. 

\begin{figure}[t]
\centering\includegraphics[width=16cm]{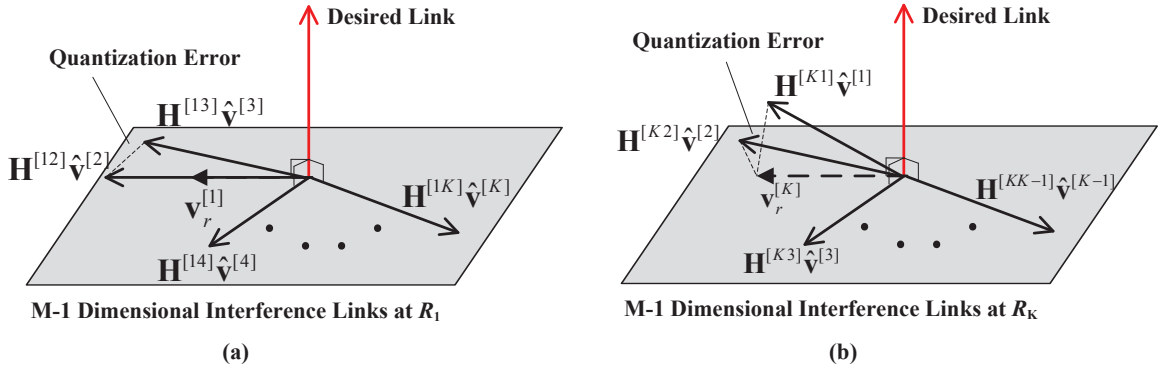}
\caption{Quantization errors at receiver $1$ and $K$ in the CSI-Exchange topology.}
\label{Fig:Residual:R1} 
\end{figure}

{
\subsection {Effect of Imperfect Local CSI Exchange}
The effect of imperfect CSI exchange between receivers is discussed in this section. In particular, the required number of CSI bits for such an exchange is derived for both the centralized-feedback and CSI-exchange topologies as follows. 

\subsubsection{Centralized-Feedback Topology}
 In limited feedback channel between receivers and CSI-BS, the receiver $k$  quantizes channel coefficients of ${\bf H}^{[k]}_{c}$ that consists of $M^2$ compelx values and sends them to the receiver 1 (CSI-BS). Using the result in \cite{Thukral:LimitedSISO:2009} that shows each of channel matrices requires $(M^2-1)\log_2P$ bits quantization to obtain the full DoF, we obtain the total number of CSI bits for the local CSI exchange as $B_M=K\cdot \left(M^2-1\right) \cdot \log_2P$ bits in the centralized-feedback topology.

\subsubsection{CSI-Exchange Topology} Under the assumption of the finite-rate feedback channel  between receiver $(K-1)$ and $K$, we analyze the quantization error of channel matrix and derive the required CSI bits for achieving the full DoF in the CSI-exchange topology. Let assume that the receiver $(K-1)$ quantizes ${\bf H}^{[K-1]}_{e}$ with $B_M$ bits random codebooks and feeds back to the receiver $K$. Using ${\bf{h}}_{e}^{[K-1]}: = {{\mathsf {vec}\left( {\bf{H}}_{e}^{[K-1]} \right)}\over {{\left\| {\bf{H}}_{e}^{[K-1]} \right\|_F}}}$ and  $M^2$ dimensional random vector codebooks $\mathcal {W}=\{ {{\bf{\hat h}}_1}, ... ,   {{\bf{\hat h}}_{2^{B_M}}}\}$, the quantized ${\bf \hat H}^{[K-1]}_{e}$ is modeled as 
 \begin{align}
{\bf{\hat H}}_{e}^{[K-1]} &=  \sqrt {1-\sigma_M} \, {\bf{H}}_{e}^{[K-1]}\, + \sqrt {\sigma_M} \Delta {\bf{H}}_{e}^{[K-1]} \label{Eq:IA:H_model}
\end{align}            		         
where $\mathsf{vec}\left( \bf H \right)$ denotes the vectorization of a matrix $\bf H$, $\Delta {\bf{H}}_{e}^{[K-1]}$ represents the quantization error with unit norm and  ${\mathbb{E}}_{}\left[\sigma_{M} \right]= \bar \Gamma {(M^2)}\cdot 2^{-{B_M \over {M^2-1}}}$.  Since $ {\bf v} ^ {[1]}$ is determined by the eigenvector of  $ {\bf{H}}_{e}^{[K-1]} {\bf{ H}}_{e}^{[K]}$ in \eqref{Eq:IA:M_V_1_K},  the quantized CSI of  ${\bf{ H}}_{e}^{[K-1]}$ causes an inaccurate computation of ${\bf  v}^{[1]}$  in limited channel feedback between receivers. The required $B_M$ bits are derived in the following lemma.
\begin{lemma}\label{Lemma:H:Q_V1}
Computation error of ${\bf v}^{[1]}$ due to the imperfect CSI exchange between receivers is upper-bounded by
\begin{equation}
{\mathbb E}\left[ {\left\| {\Delta {\bf{\bar v}}^{[1]} } \right\|^2 } \right] \le 2^{-{B_M \over {M^2-1}}}{\sum\limits_{k = 1,k \ne m}^M  {\frac{{ \left\|{{{\bf{H}}_{e}^{[K]}}}\right\|^2}}{{ \left | ({\lambda _m} - {\lambda _k}) \right |^2 }}}}.
\end{equation}
where ${{\bf{v}}^{[1]}}={{\bf{v}}_{m}}$ and ${{\bf{v}}_m}$ and $ \lambda_m$ are the $m$-th eigenvector and the corresponding eigenvalue of ${{\bf{H}}_{e}^{[K-1]}}{{\bf{H}}_{e}^{[K]}}$, respectively.
Moreover,  $B_M=(M^2-1)\cdot \log_2 P$ bits are required to achieve the full DoF in the feedback channel between receiver $(K-1)$ and $K$.
\end{lemma}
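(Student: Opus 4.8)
The plan is to treat the computation error $\Delta\bar{\bf v}^{[1]}$ as the first-order perturbation of a simple eigenvector under the quantization-induced perturbation of the product matrix ${\bf A}={\bf H}_e^{[K-1]}{\bf H}_e^{[K]}$. From the quantization model \eqref{Eq:IA:H_model}, writing $\hat{\bf H}_e^{[K-1]}={\bf H}_e^{[K-1]}+\Delta{\bf H}$ with $\Delta{\bf H}=(\sqrt{1-\sigma_M}-1){\bf H}_e^{[K-1]}+\sqrt{\sigma_M}\,\Delta{\bf H}_e^{[K-1]}$, I would observe that the leading contribution is $\Delta{\bf H}\approx\sqrt{\sigma_M}\,\Delta{\bf H}_e^{[K-1]}$, since $\sqrt{1-\sigma_M}-1=\mathcal{O}(\sigma_M)$ is second order in the small distortion $\sigma_M$. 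Because ${\bf H}_e^{[K]}$ is available perfectly at receiver $K$, the induced perturbation of the product factorizes as $\Delta{\bf A}=\Delta{\bf H}\,{\bf H}_e^{[K]}\approx\sqrt{\sigma_M}\,\Delta{\bf H}_e^{[K-1]}{\bf H}_e^{[K]}$.

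First I would invoke the standard first-order eigenvector perturbation identity for a diagonalizable matrix: if $\{\lambda_k\}$ are the simple eigenvalues of ${\bf A}$ with right eigenvectors $\{{\bf v}_k\}$ and dual left eigenvectors $\{{\bf u}_k\}$ normalized by ${\bf u}_k^\dag{\bf v}_l=\delta_{kl}$, then the perturbation of the selected eigenvector ${\bf v}^{[1]}={\bf v}_m$ is
\begin{equation}
\Delta\bar{\bf v}^{[1]}=\sum_{k\ne m}\frac{{\bf u}_k^\dag\,\Delta{\bf A}\,{\bf v}_m}{\lambda_m-\lambda_k}\,{\bf v}_k+o(\sqrt{\sigma_M}).
\end{equation}
Using the (approximate) orthonormality of the eigenvectors to decouple the squared norm into a sum over $k\ne m$, I would bound each numerator by submultiplicativity, $|{\bf u}_k^\dag\Delta{\bf A}\,{\bf v}_m|\le\|\Delta{\bf A}\|\le\sqrt{\sigma_M}\,\|\Delta{\bf H}_e^{[K-1]}\|\,\|{\bf H}_e^{[K]}\|=\sqrt{\sigma_M}\,\|{\bf H}_e^{[K]}\|$, since $\Delta{\bf H}_e^{[K-1]}$ and the eigenvectors carry unit norm. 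This yields the deterministic estimate $\|\Delta\bar{\bf v}^{[1]}\|^2\le\sigma_M\sum_{k\ne m}\|{\bf H}_e^{[K]}\|^2/|\lambda_m-\lambda_k|^2$; taking expectations and substituting the RVQ distortion bound $\mathbb{E}[\sigma_M]\le\bar\Gamma(M^2)\,2^{-B_M/(M^2-1)}$ (absorbing the $\bar\Gamma(M^2)$ factor into the constant) delivers the stated inequality.

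For the degrees-of-freedom claim I would argue that the error in ${\bf v}^{[1]}$ propagates to a residual interference term whose power is proportional to the transmit power $P$ times $\mathbb{E}[\|\Delta\bar{\bf v}^{[1]}\|^2]$, hence scales as $P\cdot 2^{-B_M/(M^2-1)}$. To preserve the full DoF the received SINR must grow linearly with $P$, which forces this residual interference to remain bounded as $P\to\infty$; setting $P\cdot 2^{-B_M/(M^2-1)}=\mathcal{O}(1)$ gives $B_M/(M^2-1)=\log_2 P$, i.e. $B_M=(M^2-1)\log_2 P$, consistent with the scaling law of \cite{Thukral:LimitedSISO:2009}.

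The main obstacle is the non-normality of ${\bf A}={\bf H}_e^{[K-1]}{\bf H}_e^{[K]}$: as a product of two generic non-Hermitian matrices it is not normal, so its eigenvectors are not exactly orthonormal and the clean Pythagorean decoupling of $\|\Delta\bar{\bf v}^{[1]}\|^2$ into the displayed sum holds only up to the conditioning of the eigenvector basis. Making the bound fully rigorous would require controlling the eigenvector condition number (the departure from orthonormality) and confirming that the eigenvalue gaps $|\lambda_m-\lambda_k|$ are almost surely nonzero so that the simple-eigenvalue expansion is valid; for i.i.d.\ Gaussian channels the spectrum is simple with probability one, so the gaps are well defined, and it is precisely their magnitude that governs the sensitivity captured by the denominators in the bound.
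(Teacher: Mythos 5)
Your proposal follows essentially the same route as the paper's own proof: a first-order eigenvector perturbation expansion of ${\bf H}_e^{[K-1]}{\bf H}_e^{[K]}$ under the RVQ error model \eqref{Eq:IA:H_model}, the numerators bounded via submultiplicativity with $\|\Delta{\bf H}_e^{[K-1]}\|=1$, the distortion bound $\mathbb{E}[\sigma_M]\le\bar\Gamma(M^2)\,2^{-B_M/(M^2-1)}$ substituted with the constant absorbed, and the scaling $B_M=(M^2-1)\log_2 P$ obtained by forcing the residual interference, which is proportional to $P\cdot 2^{-B_M/(M^2-1)}$, to remain bounded as $P\to\infty$. If anything you are more careful than the paper, which writes the perturbation numerator with right eigenvectors ${\bf v}_k^\dag$ (strictly valid only for a normal matrix, whereas the correct general formula uses the dual left eigenvectors you introduce) and silently ignores the $\sqrt{1-\sigma_M}$ scaling of the unperturbed term; your closing caveat about non-normality and eigenvector conditioning identifies exactly the gap the paper leaves implicit.
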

\begin{proof}
See Appendix~\ref{3}.
\end{proof}
Since the receivers are located close together and CSI is locally exchanged via high-capacity channel such as Wi-Fi, the required number of  CSI bits can be provided for both the centralized-feedback and CSI-exchange topologies. Then the limited number of feedback bits from receivers to transmitters becomes a dominant factor to degrade the system performance in the proposed feedback topologies.}

\section{Feedback-Bit Allocation Strategies} 
 In cooperative base-station systems, adaptive feedback bit partitioning between desired and interfering channel at each receiver has been proposed to minimize the mean loss of sum throughput in \cite{Zhang:Adaptive:2009, Heath:Adaptive_Bit:2009}. This motivates us to design the dynamic feedback-bit allocation strategy that adaptively distributes the number of feedback-bits to each pair of links for minimizing the throughput loss. To implement the feedback-bits allocation scheme, we consider the centralized bit controller which gathers channel gains from all receivers and computes the number of feedback-bits for each receiver.  Fig. \ref{Fig:Controller:Bit} depicts the structure of dynamic feedback-bit allocation in $K$-user interference channels under the constraints of total $B_T$ feedback-bits. 

\begin{figure}[t]
\centering\includegraphics[width=12cm]{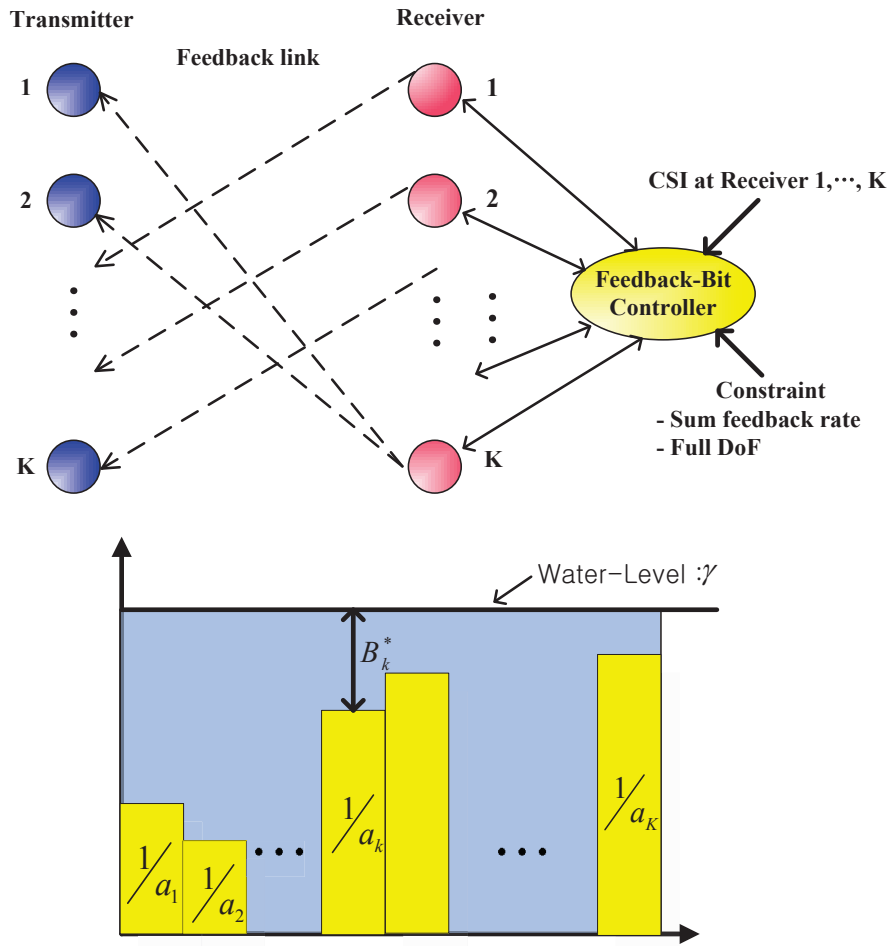}
\caption{Dynamic feedback-bits allocation scheme and the optimal bits-allocation solution. }
\label{Fig:Controller:Bit} 
\end{figure} 

\subsection{Dynamic Feedback-Bits Allocation for Minimizing Throughput Loss}
The throughput loss is characterized by the sum of residual interference in~\eref{Eq:IA:Delta_bound}. Therefore, we formulate the solution of dynamic feedback-bit allocation as following optimization problem : 
\begin{equation}\label{Eq:IA:OF}
\begin{array}{l}
 \min \limits_{\bf B} \,\,\sum\limits_{k = 1}^K {\mathbb E_{\bf H, \mathcal{W}}\left[{{{{\hat I}}}^{[k]}}\right]}  \\ 
 {\rm{s}}{\rm{.t}}{\rm{.}}\,\,\,\sum\limits_{k = 1}^K {{B_k} \le {B_{T}}}  \\ 
 \end{array}
\end{equation}
where ${\bf{B}}= \left\{ {{B_1,...,B_K}} \right\}$ are the non-negative integers. From the results in Proposition \ref{Prop:STAR:E_SUM_I} and \ref{Prop:H:E_SUM_I},  \eref{Eq:IA:OF} can be transformed to convex optimization problem :
\begin{equation}\label{Eq:IA:Convex}
\begin{array}{l}
 \min \limits_{\bf B} \,\sum\limits_{k = 1}^K {{a_k}{2^{ - \frac{{B_k^{}}}{{M - 1}}}}} \, \\ 
 {\rm{s}}{\rm{.t}}{\rm{.}}\,\,\,\sum\limits_{k = 1}^K {{B_k} \le {B_{T}}}.  \\ 
 \end{array}
\end{equation}
Here, we define $ \left\{ {{a_k}} \right\}_{k=1}^K$ in the centralized-feedback topology as
\begin{equation}\label{Eq:IA:a_k_s}
{a_k} = \overline \Gamma  \left( M \right) \cdot \left( {Pd_{\dot k k}^{ - \alpha }\cdot M^2 + Pd_{\ddot k { k}^{}}^{ - \alpha }\cdot M^2} \right)  \,\,\,\,\,\,\,\,\,\,    k=1,...,K \\ 
\end{equation}
where $ \dot k =\textsf{mod}(K+(k-3), K)+1 $ and  $\ddot {k}=\textsf{mod}(K+(k-2),K)+1$, respectively. Moreover,  $ \left\{ {{a_k}} \right\}_{k=1}^K$ in the CSI-Exchange topology is defined  as  
\begin{equation}\label{Eq:IA:a_k_m}
 \left\{ \begin{array}{l}
 {a_1} = {\bar \Gamma(M)}\cdot  \left( {Pd_{K1}^{ - \alpha }\cdot M^2 + Pd_{(K-1)\, 1}^{ - \alpha }\cdot M^2} \right) \\ 
 {a_2} = {\bar \Gamma(M)}\cdot  \left( {Pd_{(K-1)\,2}^{ - \alpha }\cdot M^2 + Pd_{K2}^{ - \alpha } \cdot M^2} \right) \\ 
 {a_k} = {\bar \Gamma(M)}\cdot  Pd_{(k - 2)\,k}^{ - \alpha }\cdot M^2, \,\,\,\,\,\,\,\,\,\,\,\,\, k  = 3,...,K. \\ 
 \end{array} \right.
\end{equation}

In order to solve the constrained optimization problem in \eref{Eq:IA:Convex}, we formulate the Lagrangian and take derivative with respect to $B_{k}$. Then, we have
\begin{equation}\label{Eq:IA:L}
L = \sum\limits_{k \in {\mathop{\rm U}\nolimits} }^{} {{a_{k}}{2^{ - \frac{{B_{k}^{}}}{{{M-1}}}}}}  + \nu \left( {\sum\limits_{k \in {\mathop{\rm U}\nolimits} }^{} {{B_{k}} - {B_{T}}} } \right)
\end{equation}
and
\begin{equation}\label{Eq:IA:L_0}
\frac{{\partial L}}{{\partial {B_{k}}}} =  - {2^{ - \frac{{B_{k}^{}}}{{{M-1}}}}}\ln 2\frac{{{a_{k}}}}{{{M-1}}} + \nu  = 0,
\end{equation}
where $\nu$ is the Lagrange multiplier and  $\rm U$ is the set of feedback links ${\rm{U}} = \left\{ {1,...,K} \right\}$. From \eqref{Eq:IA:L_0}, we obtain $B_{k}$ as
\begin{equation}\label{Eq:IA:B_solution}
B_{k}^{} = {(M-1)} \cdot {\log _2}\left( {\frac{{\mu {a_{k}}}}{{{M-1}}}} \right)
\end{equation}
under the following constraint
\begin{equation}\label{Eq:IA:M_ST}
\sum\limits_{k \in {\mathop{\rm U}\nolimits} }^{} {{(M-1)} \cdot{{\log }_2}\left( {\frac{{\mu {a_{k}}}}{{{M-1}}}} \right)}  = {B_{T}}
\end{equation}
where $\mu  = \frac{{\ln 2}}{\upsilon }$. Combining \eqref{Eq:IA:B_solution} and \eqref{Eq:IA:M_ST} with  $B_{k}^{} \ge 0$, the number of optimal feedback-bit $B_{k}^*$ that minimizes the sum residual interference is obtained as\footnote{ ${\left( a \right)^ + }$ implies that
 ${\left( a \right)^ + } = \left\{ \begin{array}{l}
 a\,\,\,\,if\,\,a \ge 0 \\ 
 0\,\,\,\,if\,\,a < 0 \\ 
 \end{array} \right.$.}
\begin{equation}\label{Eq:IA:B_opt}
B_k^* = \frac{1}{{\left| {\mathop{\rm U}\nolimits}  \right|}}{\left( {\gamma  - (M-1){\cdot}{\left| {\mathop{\rm U}\nolimits}  \right|}{\cdot}{{\log }_2}\left( {\frac{{M - 1}}{{{a_k}}}} \right)} \right)^ + }
\end{equation}
where $|$U$|$ denotes the cardinality of U and $\gamma  = {B_{T}} + \,\,\sum\limits_{k \in {\mathop{\rm U}\nolimits} } {{(M-1)} \cdot {{\log }_2}\left( {\frac{{{M-1}}}{{{a_{k}}}}} \right)}$.  The solution of \eqref{Eq:IA:B_opt} is  found  through the waterfilling algorithm, described in Algorithm \ref{waterfilling}. 

\begin{algorithm}[t]\label{waterfilling}
{\bf 1.  Waterfilling Solution }

$i$=$0$;

${\rm{U}} = \left\{ {1,...,K} \right\}$;

\While{$i$=$0$}{
Determine the water-level $\gamma  = {B_{T}} + \,\,\sum\limits_{k \in {\mathop{\rm U}\nolimits} } {{(M-1)} \cdot {{\log }_2}\left( {\frac{{{M-1}}}{{{a_{k}}}}} \right)}$.

Choose the user set $ k^{*} = \arg \max \left\{ {\frac{{{M-1}}}{{{a_{k}}}}:k \in {\mathop{\rm U}\nolimits} } \right\}$.

\lIf{$\gamma  - (M-1){\cdot}{\left| {\mathop{\rm U}\nolimits}  \right|}{\cdot}{\log _2}\left( {\frac{{{M-1}}}{{{a_{ k^{*}}}}}} \right) \ge 0$}  {$\left\{ {B_{k}^*: k \in {\rm{U}}} \right\}$ is determined by \eqref{Eq:IA:B_opt}.

$i$=$i$+$1$;

}
\lElse{ 
Let define
${{\mathop{\rm U}\nolimits} _{}} = \left\{ {{{\mathop{\rm U}\nolimits} _{}}\,\,{\rm{except \,\, for}}\,\, k^{*}} \right\}$ and $B_{ k^{*}}^* = 0$.}
}
{
{\bf 2. Decision of Feedback Bits}
Under the integer constant, the optimal feedback-bits $\left\{ {B_{k}^*: k \in {\rm{U}}} \right\}$ are rounded as
\begin{equation}\label{Eq:IA:B_int}
 B_k^* = \left\lfloor {B_k^*} \right\rfloor
\end{equation}
that satisfies $\sum\nolimits_{k=1}^{K}  B_k^*=B_{T}$,   where $\left\lfloor {x} \right\rfloor$ is the largest integer not greater than $x$.}

\caption{Waterfilling Algorithm for Dynamic Feedback-Bit Allocation}
\end{algorithm} 
{As can be seen in Fig. \ref{Fig:Controller:Bit}, the optimal number of feedback-bits $B_k^{*}$ is allocated over the inverse of the interference channel gains due to the quantization error of ${\bf v}^{[k]}$, until it does not overflow the water-level chosen to satisfy the constraint of $B_T$.}

{
\begin{itemize}
\item \emph {Overhead of the dynamic feedback-bits allocation} : The computation of $\left\{ {B_{k}^*: k \in {\rm{U}}} \right\}$ in \eref{Eq:IA:B_opt} requires the set of interfering channel gains $\left\{ {a_k} \right\}_{k = 1}^K$ at the centralized bit controller, which consists of the variability of path-loss between cross-links. Since  the dynamic feedback-bits allocation is performed by  gathering a long-term CSI  which represents a slow variability compared with small-scale fading channel, it does not require the frequent CS-exchange between receivers and the controller. This provides the benefit of lower cost for implementation.
\end{itemize}}

\subsection{Scaling Law of Total Feedback Bits}
 Each pair of transmitter-receiver obtains the interference-free link for its desired data stream under the IA strategies. However, misaligned beamformers in limited feedback channel destroy the linear scaling gain of sum capacity, especially at high SNR regime. In this subsection, we analyze the total number of feedback-bits that achieves the same DoF as the case of perfect CSI in the proposed feedback topologies. To achieve linear scaling DoF in $K$-user interference channel, the sum of residual interference maintains the constant value over whole SNR regimes according to
\begin{equation}\label{Eq:IA:DoF}
\begin{array}{l}
 DoF = \mathop {\lim }\limits_{P \to \infty } \frac{E_{\bf H, \mathcal{W}}[{R_{\textsf{limited}}}]}{{{{\log }_2}P}} \\ 
 \,\,\,\,\,\,\,\,\,\,\,\,\,\,\, = \mathop {\lim }\limits_{P \to \infty } \frac{{\sum\limits_{k = 1}^K {E_{\bf H, \mathcal{W}}\left [{\log _2}\left(P{d_{kk}^{-\alpha}{{\left| {{{{\bf{\hat r}}}^{[k]}}{{\bf{H}}^{[kk]}}{{{\bf{\hat v}}}^{[k]}}} \right|}^2}} \right) \right]}  - \sum\limits_{i = 1}^K {E_{\bf H, \mathcal{W}} \left [ {{\log }_2}\left( {{{\hat I}^{[k]}}} \right) \right ]} }}{{{{\log }_2}P}} \\ 
\,\,\,\,\,\,\,\,\,\,\,\,\,\,\, = \mathop {\lim }\limits_{P \to \infty } \frac{{\sum\limits_{k = 1}^K E_{\bf H, \mathcal{W}} \left[ {{\log _2}\left(P{d_{kk}^{-\alpha} {{\left| {{{{\bf{\hat r}}}^{[k]}}{{\bf{H}}^{[kk]}}{{{\bf{\hat v}}}^{[k]}}} \right|}^2}} \right)}\right ] }}{{{{\log }_2}P}} - \mathop {\lim }\limits_{P \to \infty } \frac{{\sum\limits_{k = 1}^K E_{\bf H, \mathcal{W}}\left[{{{\log }_2}\left( {{{\hat I}^{[k]}}} \right) } \right] }}{{{{\log }_2}P}} \\ 
\,\,\,\,\,\,\,\,\,\,\,\,\,\,\, \ge K- \mathop {\lim }\limits_{P \to \infty } \frac{{\sum\limits_{k = 1}^K {{{\log }_2}\left( {{E_{\bf H, \mathcal{W}} \left[ {\hat I}^{[k]}\right]}} \right)} }}{{{{\log }_2}P}} \\ 
 \,\,\,\,\,\,\,\,\,\,\,\,\,\,\, \mathop  = \limits^{(a)} K \\ 
 \end{array}
 \end{equation}
where (a) follows from the constant value of $\sum\nolimits_{k = 1}^K {{{\log }_2}\left( E_{\bf H, \mathcal{W}}{{\left[{\hat I}^{[k]}\right]}} \right)}$.

Let denote $B_T^*$ as the total feedback bits that achieve linearly scaling DoF with $K$. Then we formulate the sum residual interference as the function of $a_k$ and $B_k$
\begin{equation}\label{Eq:IA:B_sum_const}
\begin{array}{l}
 \sum\limits_{k = 1}^K {{{\log }_2}\left( E_{\bf H, \mathcal{W}}\left[ {{{\hat I}^{[k]}}}\right] \right)} \le \sum\limits_{k = 1}^K {\log _2}\left( {{{{ a}_k}{2^{ - \frac{{{B_k}}}{{M - 1}}}}} } \right) \\ 
 \,\,\,\,\,\,\,\,\,\,\,\,\,\,\,\,\,\,\,\,\,\,\,\,\,\,\, \,\,\,\,\,\,\,\,\,\,\,\,\,\,\,\,\,\,\,\,\,\,\,\,\,\,\,\, = \sum\limits_{k = 1}^K {\log _2}\left( {{ a}_k}\right) + \sum\limits_{k = 1}^K {\log _2}\left( {{{}{2^{ - \frac{{{B_k}}}{{M - 1}}}}} } \right) \\  
\,\,\,\,\,\,\,\,\,\,\,\,\,\,\,\,\,\,\,\,\,\,\,\,\,\,\, \,\,\,\,\,\,\,\,\,\,\,\,\,\,\,\,\,\,\,\,\,\,\,\,\,\,\,\, =C\\ 
 \end{array}
 \end{equation}
 where $C>0$ is constant. 
From \eqref{Eq:IA:B_sum_const}, we obtain $B^*_T$ as
\begin{equation}\label{Eq:IA:B_sum_opt}
\begin{array}{l}
 B_{T}^* = \sum\limits_{k = 1}^K {{B_k}}  \\ 
 \,\,\,\,\,\,\,\,\,\, = \left( {M - 1} \right) \cdot \left( {\sum\limits_{k=1}^K {{{\log }_2}} \,{{ a}_k} - C} \right) \\ 
 \,\,\,\,\,\,\,\,\,\, = K \cdot \left( {M - 1} \right) \cdot {\log _2}P + \left( {M - 1} \right)\cdot\left( {\sum\limits_{k = 1}^K {{{\log }_2}} \,{{\hat a}_k} - C} \right) \\ 
 \end{array}
\end{equation}
where ${\hat a}_k=\frac{{ a}_k}{P}$. Since the total feedback-bits is the non-negative integer, we determine ${ B}_{T}^*$ as 
\begin{equation}\label{Eq:IA:B_sum_bar}
 B_{T}^* = {\rm{nint}}(B_{T}^*)
\end{equation}
where ${\rm{nint}}(\emph {x})$ is the nearest integer function of \emph{x}.

 {
 \begin{remark} Compared with the total feedback bits in the full-feedback topology  in [15]  that scale with $\mathcal O \left(K^2\cdot (M^2-1)\cdot \log_2P \right)$, the proposed feedback topologies requires $B_T^{*}=\mathcal O \left(K\cdot(M^2-1)\cdot \log_2P\right)$ bits to achieve the full DoF. The smaller number of feedback bits results from the proposed feedback structure that sequentially computes the IA precoder based on the pre-determined ones and exchanges precoders between transmitters and receivers.
 \end{remark}
 }

 The required total feedback bits \eqref{Eq:IA:B_sum_bar} that achieve $K$ DoF can not be computed without the centralized controller. To implement the feedback-bits allocation in the distributed $K$-user networks, we modify the constraint in \eref{Eq:IA:B_sum_const} into the individual constraints of receiver $k$ as follows.
\begin{equation}\label{Eq:IA:E_b_k_cond}
\begin{array}{l}
 {\log _2}\left( \mathbb {E}_{ {\bf H}, \mathcal{W}}\left[{{{\hat { I}}^{[k]}}}\right] \right) \le {\log _2}\left( {{a_k}{2^{ - \frac{{{B_k}}}{{M - 1}}}}} \right)  = \frac{C}{K}, \,\,\, \forall k. \\ 
 \end{array}
\end{equation}
Then, the required feedback bits ${ B}_k^{d^*}$ for the $k$-th beamformer that achieve $K$ DoF are derived as
\begin{equation}\label{Eq:IA:E_b_k_distributed}
\begin{array}{l}
 B _k^{d^*}={\rm{nint}}(B_{k}^d)
\\
\end{array}
\end{equation}
where ${B_k^d} = \left( {M - 1} \right) \cdot \left ( {\log _2}{ a}_k -{\frac{C}{K}} \right)$.  From the components that consist of $a_k$ in \eref{Eq:IA:a_k_m},  ${B}_{K-1}^{d^*}$ and ${B}_{K}^{d^*}$ are computed under the assumption of long-term CSI exchange between receiver $(K-1)$ and $K$ and  other ${B}_k^{d^*}$ is computed with local channel knowledge at receiver $(k-2)$. Based on \eqref{Eq:IA:E_b_k_distributed}, we design the CSI-Exchange topology with the distributive feedback-bits allocation, which achieves linearly scaling DoF with $K$ in limited feedback channel. The procedure is represented in Algorithm \ref{Exchange_2}.

\begin{algorithm} [t]\label{Exchange_2}
\,\,
\textbf{1. Feedback of  ${\bf{\hat v}}^{[1]}$ and ${\bf{\hat v}}^{[2]}$}: Receiver $K$ computes ${\bf{ v}}^{[k]}$ and the corresponding feedback bits ${ B}_k^{d^*}$, where $k=1, 2$. Then, ${\bf{ v}}^{[k]}$ is quantized to ${\bf{\hat v}}^{[k]}$ using ${ B}_k^{d^*}$ bits random codebook and fed back to the corresponding transmitter $k$. 

\textbf{2. Feedback of ${\bf{\hat v}}^{[3]} ,...,{\bf{\hat v}}^{[K]}$} \\
\, \For{$k$=$3$:$K$}{
 $(k-2)$-th receiver calculates ${B}_k^{d^*}$ feedback bits and ${ {\bf {v}}}^{[k]}$. Then,  ${ {\bf {v}}}^{[k]}$ is quantized to ${\hat {\bf {v}}}^{[k]}$ and  fed back to the transmitter $k$.\\
}
\,\,\,
\caption{CSI-Exchange Topology with Distributive Feedback-bits Design}
\end{algorithm}
\,\,

\section{Numerical Results}
In this section, we represent the performance of proposed IA feedback topologies in $K$-user MIMO interference channel with limited feedback. The throughput improvement of dynamic feedback-bits allocation in the different feedback topologies is verified comparing with the conventional case of \emph {equal feedback-bit allocation} in which the number of feedback-bits sent to each transmitter is equal and fixed. Also, the scaling law of total feedback-bits that achieves same DoF as the case of perfect CSI is shown in both centralized and distributed $K$-user network models. 
\begin{figure}[t]
\centering\includegraphics[width=13cm]{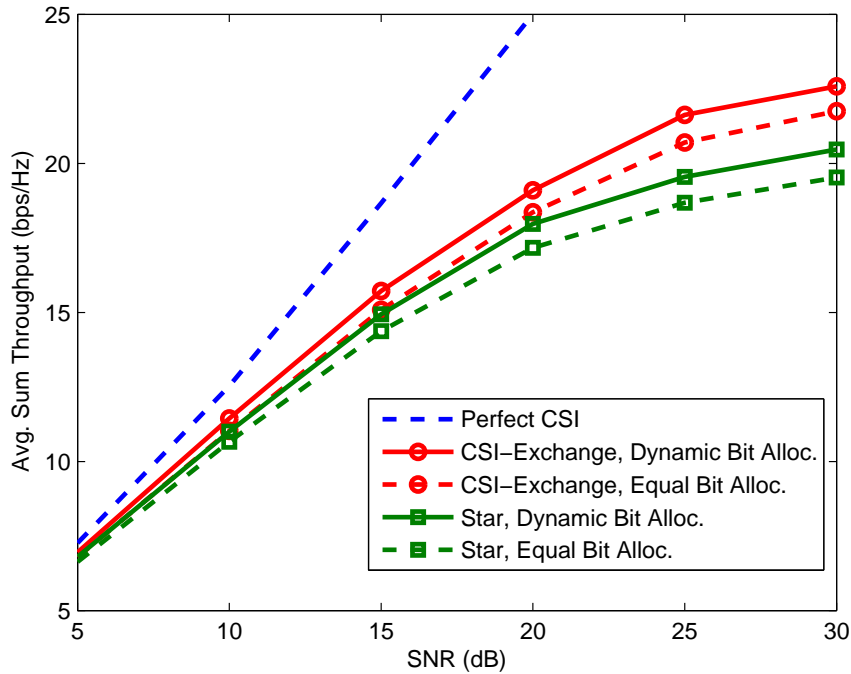}
\caption{Comparison of average sum throughput between the star and CSI-exchange topology in $B_{T}$=$16$.}
\label{Fig:Limited_Feedback:Performance_16}
\end{figure} 

\begin{figure}[t]
\centering\includegraphics[width=13cm]{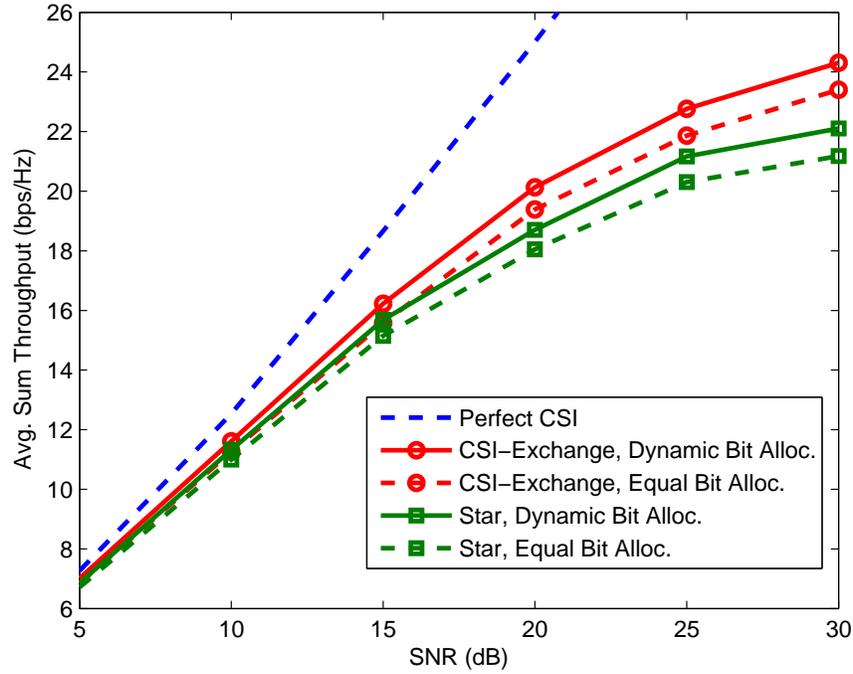}
\caption{Comparison of average sum throughput between the star and CSI-exchange topology in $B_{T}$=$20$.}
\label{Fig:Limited_Feedback:Performance_20}
\end{figure} 
Fig. \ref{Fig:Limited_Feedback:Performance_16} and Fig. \ref{Fig:Limited_Feedback:Performance_20} show the average sum throughput of $4$-user $3$$\times$$3$ MIMO IA channel in limited feedback channel, which are constructed by the star and CSI-exchange feedback topology with the dynamic feedback-bits algorithm proposed in section V.A. For comparison, the curve for equal feedback-bit allocation is also plotted. { We set the parameters $B_{T}=16$ and $20$,  $\alpha=3.5$ and assume that each pair of transmitter-receiver is uniformly distributed within the range of ${\left( {{d_{kj}}/{d_{kk}}} \right)_{k \ne j}}\in[$$1$  $3$$]$.}
Moreover, the small-scale fading is assumed to be static during the procedure of proposed feedback algorithms.
While the dynamic feedback-bits allocation scheme requires the centralized feedback-bits controller, it provides the performance enhancement than the equal feedback-bits allocation scheme. The performance gap between dynamic and equal bit allocation becomes larger in high SNR since the proposed feedback-bits allocation effectively regulates the strong residual interference.

{
Compared the performance of CSI-exchange with that of star feedback topology in the same feedback bits, the CSI-exchange topology always provides a higher sum throughput than the star feedback topology. This is due to the different procedure of precoder design in limited feedback environment, represented in section IV.B. In the star feedback topology, all precoders are simultaneously computed at CSI-BS and fed back to the corresponding transmitters through the limited feedback channel, which causes two misaligned interferers at every receiver. However, all receivers except for receiver $(K-1)$ and $K$ experience a single misaligned interferers in the CSI-exchange topology, since IA precoders  are sequentially designed on the subspace of pre-determined quantized precoder.}

In Fig. \ref{Fig:Limited_Feedback:Performance_16} and \ref{Fig:Limited_Feedback:Performance_20}, we find that a linearly scaling DoF can not be achieved in high SNR regimes since the given $B_T$ feedback-bits are not enough to manage the strong residual interference. In \eref{Eq:IA:B_sum_bar} and \eref{Eq:IA:E_b_k_distributed}, we suggest that the required number of feedback bits that achieves linear scaling law of sum throughput in both centralized and distributed networks. Fig.\ref{Fig:Limited_Feedback:M_CSI_IA} shows the performance of sum throughput under the following two feedback-bits allocation schemes based on \eqref{Eq:IA:B_sum_bar} and \eref{Eq:IA:E_b_k_distributed}:
\begin{itemize}
 \item[1)] Centralized feedback-bits allocation strategy : It assumes that the centralized controller collects all cross-link gains $\left\{ {a_k} \right\}_{k = 1}^K$ and computes $B^{*}_T$ using \eqref{Eq:IA:B_sum_opt} and \eqref{Eq:IA:B_sum_bar}. Each of ${ B}_k^*$ is computed by the dynamic feedback-bits allocation scheme in Algorithm \ref{waterfilling}.
  \item[2)] Distributed bit allocation strategy: Each ${ B}_k^{d^*}$ is sequentially computed based on the local CSI at receiver $k$. Details are described in Algorithm \ref{Exchange_2}.
 \end{itemize}
Setting the parameters $C=2$ and  ${\left( {{d_{kj}}/{d_{kk}}} \right)_{k \ne j}} = 2$, the sum throughput in both two schemes shows linear increase over the whole SNR regions and the total feedback-bits in both 1) and 2) are logarithmically increased with $P$.

\begin{figure}[t]
\centering\includegraphics[width=12cm]{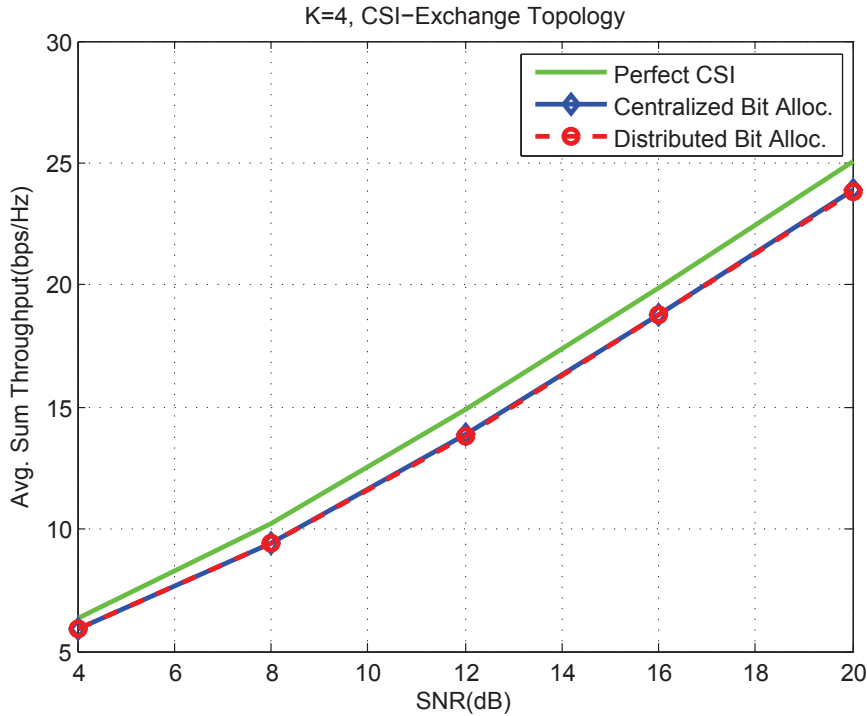}
\caption{Scaling law of sum throughput in limited feedback channel.}
\label{Fig:Limited_Feedback:M_CSI_IA}
\end{figure}

\section{Conclusion}
In this paper, the efficient feedback topologies for IA have been proposed in $K$-user MIMO interference channels. We showed that the proposed feedback topologies provide the dramatic reduction of network overhead compared with a conventional feedback framework for IA. In the context of limited feedback channel, we analyzed the upper bounds of sum residual interference in given feedback topologies which are equivalent to the sum throughput loss due to the quantization error. Using these bounds, we suggested the dynamic feedback-bits allocation scheme that minimizes sum residual interference with the water filling solution. The performance gain of dynamic feedback-bits allocation was dominant in high SNR regions while each of receiver is affected by a strong residual interference. Furthermore, the scaling law of feedback bits achieving IA is derived, which is linearly increased with $K(M-1)$ and ${\log}_{2}P$. For the practical implementation, we developed the feedback-bits allocation scheme that only requires the long-term channel gains and distributive feedback-bits controlling system without any centralized controller.   

\appendices

\section{Proof of Proposition~\ref{Prop:STAR:E_SUM_I}} \label{1}

In the centralized-feedback topology, the design of IA precoders follows \eqref{Eq:IA:GS}. From the IA condition~\eref{Eq:IA:ex_cond}, the interference from transmitter $\hat k$ and $\bar k$ is aligned on the reference vector {${\bf v}_r^{[k]}={{{\bf{H}}^{[k \hat k]}}} {\bf{ v}}_{}^{[\hat k]}={{{\bf{H}}^{[k \bar k]}}} {\bf{ v}}_{}^{[\bar k]}$, $\forall k$.} However, the finite-rate feedback channel from receiver 1 (CSI-BS) to the corresponding transmitters causes the quantization error of beamformers so that the residual interference is generated by both quantized ${\bf{\hat v}}^{[\hat k]}$ and  ${\bf{\hat v}}^{[\bar k]}$ at receiver $k$. Then, we derive ${\hat I}^{[k]}$ as \footnote{To derive the residual interference, the set of $\left\{ {{{{\bf{\hat v}}}^{[k]}}, \forall k} \right\}$ and ${\bf v}_r^{[k]}$ are assumed to be known at receiver $k$.}
\begin{align}
{{{{\hat I}}}^{[k]}} &= \sum\limits_{m = 1, m \ne k}^KPd_{km}^{ - \alpha}{\left| {{{{{\bf{\hat r}}}^{[k]\dag }}{{\bf{H}}^{[km]}}{{{\bf{\hat v}}}^{[m]}}} } \right|^2} \nonumber\\
&= Pd_{k\hat k}^{ - \alpha }{\left| {{{{\bf{\hat r}}}^{[k]\dag }}{{\bf{H}}^{[k\hat k]}}{{{\bf{\hat v}}}^{[\hat k]}}} \right|^2} +Pd_{k \bar k}^{ - \alpha }{\left| {{{{\bf{\hat r}}}^{[k]\dag }}{{\bf{H}}^{[k \bar k]}}{{{\bf{\hat v}}}^{[\bar k]}}} \right|^2} \label{Eq:STAR:I_k}\\
&= Pd_{k\hat k}^{ - \alpha }\sigma_{\hat k}{\left| {{{{\bf{\hat r}}}^{[k]\dag }}{{\bf{H}}^{[k\hat k]}}{{{\Delta \bf{ v}}}^{[\hat k]}}} \right|^2} +Pd_{k \bar k}^{ - \alpha }\sigma_{\bar k}{\left| {{{{\bf{\hat r}}}^{[k]\dag }}{{\bf{H}}^{[k \bar k]}}{{{\Delta \bf{ v}}}^{[\bar k]}}} \right|^2} \nonumber\\
&\mathop  \le \limits^{(a)} Pd_{k\hat k}^{ - \alpha }\sigma _{\hat k}{\left\| {{{{\bf{\hat r}}}^{[k]\dag }}{{\bf{H}}^{[k \hat k]}}} \right\|^2}{\left\| {\Delta {{\bf{v}}^{[\hat k]}}} \right\|^2} + Pd_{k \bar k}^{ - \alpha }\sigma _{\bar k}{\left\| {{{{\bf{\hat r}}}^{[k]\dag }}{{\bf{H}}^{[k \bar k]}}} \right\|^2}{\left\| {\Delta {{\bf{v}}^{[\bar k]}}} \right\|^2}\nonumber
\end{align}

where { ${{\bf{\hat r}}^{[k]}}$ is designed to lie in the nullspace of $\{{\bf v}_{r}^{[k]}, {\bf \hat I}^{[k]}\} $, ${\bf \hat I}^{[k]}=\{   {\bf{H}}^{[km]}{\hat {\bf v}}^{[m]} | \,  \forall \,m, m \ne \hat k, \bar k, k \} $} and 
(a) follows from Cauchy-Schwarz inequality.\footnote{Note that ${\left| {{{\bf{a}}^\dag }{\bf{b}}} \right|^2} \le {\left\| {\bf{a}} \right\|^2}{\left\| {\bf{b}} \right\|^2}$, where ${\bf{a}},{\bf{b}} \in {\mathbb{C}^{M \times 1}}$}

 From this result, the expectation of ${{{\hat I}^{[k]}}}$ is upper-bounded by
\begin{align}
{\mathbb E_{{\bf H}, \mathcal{W}}}\left[ {{{\hat I}^{[k]}}} \right]  &\le  Pd_{k\hat k}^{ - \alpha } {\mathbb E} \left[\sigma _{\hat k}\right]{\mathbb E_{\bf H}}\left[{\left\| {{{\bf{H}}^{[k \hat k]}}} \right\|^2}\right]+ Pd_{k \bar k}^{ - \alpha }\mathbb E \left[\sigma _{\bar k}\right] {\mathbb E_{\bf H}}\left[{\left\| {{{\bf{H}}^{[k \bar k]}}} \right\|^2}\right ] \nonumber \\
&\le {\bar \Gamma(M)}\cdot  \left(Pd_{k \hat k}^{ - \alpha } \cdot M^2 \cdot 2^{- {B_{\hat k}\over {M-1}}} + Pd_{k \bar k}^{ - \alpha }\cdot M^2 \cdot 2^{- {B_{\bar k}\over {M-1}}}\right)    , \,\,\, \forall k \label{Eq:STAR:E_I_k}.
\end{align}

\section{Proof of Proposition~\ref{Prop:H:E_SUM_I}} \label{2}

Consider the residual interference at receiver $(K-1)$ and $K$ affected by a misalignment between the interference from transmitter $1$ and $2$.  The ${{{\hat I}}^{[K-1]}}$ is derived as
\begin{align}
{{{{\hat I}}}^{[K-1]}} &= \sum\limits_{k = 1, k \ne (K-1)}^KPd_{(K-1)k}^{ - \alpha}{\left| {{{{{\bf{\hat r}}}^{[K-1]\dag }}{{\bf{H}}^{[(K-1)\,k]}}{{{\bf{\hat v}}}^{[k]}}} } \right|^2}  \nonumber\\
&= Pd_{(K-1)1}^{ - \alpha }{\left| {{{{\bf{\hat r}}}^{[K-1]\dag }}{{\bf{H}}^{[(K-1)\,1]}}{{{\bf{\hat v}}}^{[1]}}} \right|^2} +Pd_{(K-1)\, 2}^{ - \alpha }{\left| {{{{\bf{\hat r}}}^{[K-1]\dag }}{{\bf{H}}^{[(K-1)\, 2]}}{{{\bf{\hat v}}}^{[2]}}} \right|^2} \label{Eq:ex:I_K_1} \\
&= Pd_{(K-1)1}^{ - \alpha }\sigma_{1}{\left| {{{{\bf{\hat r}}}^{[K-1]\dag }}{{\bf{H}}^{[(K-1)\,1]}}{{{\Delta \bf{ v}}}^{[1]}}} \right|^2} +Pd_{(K-1) 2}^{ - \alpha }\sigma_{2}{\left| {{{{\bf{\hat r}}}^{[K-1]\dag }}{{\bf{H}}^{[(K-1)\, 2]}}{{{\Delta \bf{ v}}}^{[2]}}} \right|^2}\nonumber \\
& \mathop \le  Pd_{(K-1)1}^{ - \alpha }\sigma _{1}{\left\| {{{{\bf{\hat r}}}^{[K-1]\dag }}{{\bf{H}}^{[(K-1)1]}}} \right\|^2}{\left\| {\Delta {{\bf{v}}^{[1]}}} \right\|^2} + Pd_{(K-1) 2}^{ - \alpha }\sigma _{2}{\left\| {{{{\bf{\hat r}}}^{[K-1]\dag }}{{\bf{H}}^{[(K-1)2]}}} \right\|^2}{\left\| {\Delta {{\bf{v}}^{[2]}}} \right\|^2} \nonumber
\end{align}
where {${{\bf{\hat r}}^{[K-1]}}$ is on the nullspace of $\{  {\bf v}_r^{[K-1]}, {\bf{H}}^{[(K-1)\,3]}{\hat {\bf v}}^{[3]}, \cdots, {\bf{H}}^{[(K-1)\,(K-2)]}{\hat {\bf v}}^{[K-2]}, {\bf{H}}^{[(K-1)\, K]}{\hat {\bf v}}^{[K]}\} $.} Then, the expectation of ${{{\hat I}}^{[K-1]}}$ is upper bounded by
\begin{align}\label{Eq:Eex:I_K_1}
{\mathbb E_{{\bf H}, \mathcal{W}}}\left[ {{{\hat I}^{[K-1]}}} \right]&\le Pd_{(K-1)1}^{ - \alpha } {\mathbb E} \left[\sigma _{1}\right]{\mathbb E_{\bf H}}\left[{\left\| {{{\bf{H}}^{[(K-1)1]}}} \right\|^2}\right]+ Pd_{(K-1)  2}^{ - \alpha }\mathbb E \left[\sigma _{2}\right] {\mathbb E_{\bf H}}\left[{\left\| {{{\bf{H}}^{[(K-1) 2]}}} \right\|^2}\right ]\nonumber \\
&\le {\bar \Gamma(M)}\cdot  \left(Pd_{(K-1) 1}^{ - \alpha } \cdot M^2 \cdot 2^{- {B_{1}\over {M-1}}} + Pd_{(K-1) 2}^{ - \alpha }\cdot M^2 \cdot 2^{- {B_{2}\over {M-1}}} \right).
\end{align}
Similarly, { ${{\bf{\hat r}}^{[K]}}$ is designed to lie in the nullspace of $\{  {\bf v}_r^{[K]}, {\bf{H}}^{[K3]}{\hat {\bf v}}^{[3]}, \cdots, {\bf{H}}^{[K (K-1)]}{\hat {\bf v}}^{[K-1]}\}$} and then ${{{\hat I}}^{[K]}}$ is derived as
\begin{align}
{{{{\hat I}}}^{[K]}} &= \sum\limits_{k = 1, k \ne K}^KPd_{Kk}^{ - \alpha}{\left| {{{{{\bf{\hat r}}}^{[K]\dag }}{{\bf{H}}^{[Kk]}}{{{\bf{\hat v}}}^{[k]}}} } \right|^2} \nonumber\\
&= Pd_{K1}^{ - \alpha }{\left| {{{{\bf{\hat r}}}^{[K]\dag }}{{\bf{H}}^{[K1]}}{{{\bf{\hat v}}}^{[1]}}} \right|^2} +Pd_{K2}^{ - \alpha }{\left| {{{{\bf{\hat r}}}^{[K]\dag }}{{\bf{H}}^{[K 2]}}{{{\bf{\hat v}}}^{[2]}}} \right|^2}\label{Eq:ex:I_K}\\
&= Pd_{K1}^{ - \alpha }\sigma_{1}{\left| {{{{\bf{\hat r}}}^{[K]\dag }}{{\bf{H}}^{[K1]}}{{{\Delta \bf{ v}}}^{[1]}}} \right|^2} +Pd_{K 2}^{ - \alpha }\sigma_{2}{\left| {{{{\bf{\hat r}}}^{[K]\dag }}{{\bf{H}}^{[K 2]}}{{{\Delta \bf{ v}}}^{[2]}}} \right|^2}\nonumber\\
&\mathop  \le  Pd_{K1}^{ - \alpha }\sigma _{1}{\left\| {{{{\bf{\hat r}}}^{[K]\dag }}{{\bf{H}}^{[K 1]}}} \right\|^2}{\left\| {\Delta {{\bf{v}}^{[1]}}} \right\|^2} + Pd_{K2}^{ - \alpha }\sigma _{2}{\left\| {{{{\bf{\hat r}}}^{[K]\dag }}{{\bf{H}}^{[K 2]}}} \right\|^2}{\left\| {\Delta {{\bf{v}}^{[2]}}} \right\|^2}. \nonumber
\end{align}

Then, the upper-bound of expectation of ${\mathbb E_{{\bf H}, \mathcal{W}}}\left[ {{{\hat I}^{[K]}}} \right]$ is represented as 
\begin{align}\label{Eq:ex:I_K}
{\mathbb E_{{\bf H}, \mathcal{W}}}\left[ {{{\hat I}^{[K]}}} \right]&\le \mathbb E_{\bf H, \mathcal W} \left[ Pd_{K1}^{ - \alpha }{\left| {{{{\bf{\hat r}}}^{[K]\dag }}{{\bf{H}}^{[K1]}}{{{\bf{\hat v}}}^{[1]}}} \right|^2} +Pd_{K 2}^{ - \alpha }{\left| {{{{\bf{\hat r}}}^{[K]\dag }}{{\bf{H}}^{[K 2]}}{{{\bf{\hat v}}}^{[2]}}} \right|^2}\right ]\nonumber \\
&\le{\bar \Gamma(M)}\cdot  \left( Pd_{K 1}^{ - \alpha } \cdot M^2 \cdot 2^{- {B_{1}\over {M-1}}} + Pd_{K 2}^{ - \alpha }\cdot M^2 \cdot 2^{- {B_{2}\over {M-1}}}\right). 
\end{align}

From Algorithm \ref{Exchange_1}, the ${{\bf{ v}}^{[k+2]}}$ is sequentially designed on the subspace of ${\bf v}_{r}^{[k]}={\bf H}^{[k\, (k+1)]}{\bf \hat v}^{[k+1]}$ and fed back to transmitter $(k+2)$ through $B_{k+2}$ bits feedback channel at receiver $k$, $k=1,..., K-2$. Therefore, the quantized ${{\bf{\hat v}}^{[k+2]}}$ causes the misalignment with ${\bf v}_{r}^{[k]}$ which generates the residual interference at receiver $k$. Then, we derive ${{{\hat I}}^{[k]}}$ as
\begin{align}
 {{{{\hat I}}}^{[k]}} &= \sum\limits_{m = 1, m \ne k}^KPd_{km}^{ - \alpha}{\left| {{{{{\bf{\hat r}}}^{[k]\dag }}{{\bf{H}}^{[km]}}{{{\bf{\hat v}}}^{[m]}}} } \right|^2} \nonumber\\ 
&= Pd_{k \,(k+2)}^{ - \alpha }{\left| {{{{\bf{\hat r}}}^{[k]\dag }}{{\bf{H}}^{[k\, (k+2)]}}{{{\bf{\hat v}}}^{[k+2]}}} \right|^2} \label{Eq:ex:I_k}\\
&= Pd_{k\, (k+2)}^{ - \alpha }\sigma_{k+2}{\left| {{{{\bf{\hat r}}}^{[k]\dag }}{{\bf{H}}^{[k \,(k+2)]}}{{{\Delta \bf{ v}}}^{[k+2]}}} \right|^2} \label{Eq:ex:I_k} \nonumber\\
&\mathop  \le  Pd_{k\, (k+2)}^{ - \alpha }\sigma _{k+2}{\left\| {{{{\bf{\hat r}}}^{[k]\dag }}{{\bf{H}}^{[k \,(k+2)]}}} \right\|^2}{\left\| {\Delta {{\bf{v}}^{[k+2]}}} \right\|^2}  \nonumber
\end{align}
where { ${{\bf{\hat r}}^{[k]}}$ is on the nullspace of $\{{\bf v}_{r}^{[k]}, {\bf \hat I}^{[k]}\} $ and ${\bf \hat I}^{[k]}=\{   {\bf{H}}^{[km]}{\hat {\bf v}}^{[m]} | \,  \forall\, m, m \ne k,k+1,k+2 \}$}. From this result, the expectation of ${\hat I}^{[k]}$  is upper bounded by
\begin{equation}\label{Eq:ex:I_k_bound}
{E_{\bf H, \mathcal{W}}}\left[ {{{\hat I}^{[k]}}} \right] \le {\bar \Gamma(M)}\cdot  Pd_{k\,(k+2)}^{ - \alpha }\cdot M^2 \cdot 2^{- {B_{k+2} \over {M-1}}}.
\end{equation}

\section{Proof of Lemma~\ref{Lemma:H:Q_V1}} \label{3} {
Note that the inaccurate ${\bf \bar v}^{[1]}$ due to the quantization error of  ${\bf{ H}}_{e}^{[K-1]}$ is computed as
\begin{equation}\label{Eq:IA:M_V_1_K_Quantize}
\begin{array}{l}
 {\bf{\bar v}}_{}^{[1]}\,\,{\rm{ = }}\,{\textsf{eigenvector \,of}}\,\left( {{\bf{\hat H}}_{e}^{[K-1]}}{{\bf{ H}}_{e}^{[K]}}    \right) \\ 
 \,\,\,\,\,\,\,\,\,\,={\textsf{eigenvector \,of}} \left(\sqrt {1-\sigma_{M}} \, {{\bf{H}}_{e}^{[K-1]}}{\bf{H}}_{e}^{[K]}\, + \sqrt {\sigma_{M}} {\Delta {\bf{H}}_{e}^{[K-1]}}{\bf{H}}_{e}^{[K]}\right).
 \end{array}
\end{equation}

From the perturbation theory in [33], [34],  we formulate ${\bf{\bar v}}^{[1]}$  as
\begin{equation}
{{\bf{\bar v}}^{[1]}} =  {{{\bf{v}}^{[1]}} - \sqrt{\sigma _{M}}\sum\limits_{k = 1,k \ne m}^M {\frac{{{\bf{v}}_k^{\dag} \Delta{{\bf{H}}_{e}^{[K-1]}} {\bf{H}}_{e}^{[K]}{{\bf{v}}_m}}}{{{\lambda _m} - {\lambda _k}}}} {{\bf{v}}_k}}
\end{equation}
where ${{\bf{v}}^{[1]}}={{\bf{v}}_{m}}$ and $\{ {{\bf{v}}_1}, ... ,{{\bf{v}}_M}\}$ and $\{ \lambda_1, ... , \lambda_M\}$ are the set of eigenvectors and the corresponding eigenvalues of ${{\bf{H}}_{e}^{[K-1]}}{{\bf{H}}_{e}^{[K]}}$, respectively.

Therefore, the computation error of $\Delta {\bf \bar v}^{[1]}= {\bf v}^{[1]}-{\bf \bar v}^{[1]}$ is upper bounded by
\begin{equation} \label{effect_H}
{\mathbb E}\left[ {\left\| {\Delta {\bf{\bar v}}^{[1]} } \right\|^2 } \right] \le 2^{-{B_M \over {M^2-1}}}\underbrace {\sum\limits_{k = 1,k \ne m}^M  {\frac{{ {\left\| { {\bf{ H}}_e^{[K]} } \right\|^2 }}}{{ \left | ({\lambda _m} - {\lambda _k}) \right |^2 }}}}_{\bar a_1}.
\end{equation}
Applying the result in \eqref{effect_H} to Proposition \ref{Prop:H:E_SUM_I} , the residual interference due to the $\Delta {\bf \bar v}^{[1]}$ is upper-bounded by 
\begin{equation}
{\mathbb E}_{\bf H, \mathcal W}\left[ \sum\limits_{k=(K-1)}^{K}Pd_{k1}^{ - \alpha }{\left| {{{{\bf{\hat r}}}^{[k]\dag }}{{\bf{H}}^{[k1]}}{{{ \bf{ \bar v}}}^{[1]}}} \right|^2} \right]\le P\cdot 2^{-{{B_M}\over {M^2-1}}} \cdot \left( d_{(K-1)1}^{ - \alpha }+d_{K1}^{ - \alpha }\right)\cdot M^2 \cdot {\mathbb E}[\bar a_1]
\end{equation}
and its value should be constant  to achieve the full DoF. Therefore, the required feedback bits between receiver $(K-1)$ and $K$ is given by $B_M=(M^2-1)\cdot \log_2 P $.}
\bibliographystyle{ieeetr}

\clearpage

\end{document}